\DeclareSymbolFontAlphabet{\mathbb}{AMSb}
\DeclareSymbolFontAlphabet{\mathbbl}{bbold}
\newcommand{\N}{\mathbb{N}}
\newcommand{\Z}{\mathbb{Z}}
\definecolor{nicebg}{HTML}{f6f0e4}
\definecolor{nicered}{HTML}{7f0a13}
\definecolor{nicebgred}{HTML}{f2e7e8}
\definecolor{niceblue}{HTML}{104354}
\definecolor{nicebgblue}{HTML}{e8edee}
\definecolor{nicegreen}{HTML}{217516}
\definecolor{nicebggreen}{HTML}{e9f1e8}
\definecolor{nicepurple}{HTML}{884bab}
\definecolor{nicebgpurple}{HTML}{f3edf7}
\definecolor{niceorange}{HTML}{d27c11}
\definecolor{nicebgorange}{HTML}{fbf2e8}
\definecolor{nicepink}{HTML}{e95f9f}
\definecolor{nicebgpink}{HTML}{fdeff6}
\definecolor{niceredlight}{HTML}{c9888d}
\definecolor{nicebluelight}{HTML}{78a4b8}
\definecolor{nicegreenlight}{HTML}{76de68}
\definecolor{nicepurplelight}{HTML}{bc87db}
\definecolor{niceredbright}{HTML}{bd0310}
\definecolor{nicebgredbright}{HTML}{f9e6e8}
\definecolor{nicebluebright}{HTML}{197b9b}
\definecolor{nicebgbluebright}{HTML}{e8f2f5}
\newcommand{\TraNs}{tra}
\newcommand{\TraName}[2][]{\tag*{$\langle${\textnormal {\textsf{#2}}}$\rangle$}\label{\TraNs:#1:#2}}
\newcommand{\TraRef}[2][]{\text{\ref{\TraNs:#1:#2}}}
\newtheorem{theorem}   {Theorem}
\newtheorem{construction}   {Construction}
\newtheorem{definition}  {Definition}
\newtheorem{proposition} {Proposition}
\newtheorem{lemma}       {Lemma}
\newtheorem{example}     {Example}
\newtheorem{corollary}   {Corollary}
\newtheorem{remark} {Remark}
   \def\@citecolor{niceblue}%
   \def\@urlcolor{niceblue}%
   \def\@linkcolor{nicered}%
\def\orcidID#1{\smash{\href{http://orcid.org/#1}{\protect\raisebox{1pt}{\protect\includegraphics{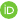}}}}}
\pretocmd\start@gather{%
    \if@minipage\kern-\topskip\kern-\baselineskip\kern+7pt\fi
}{}{}
\newcommand{\parag}[1]{\paragraph*{#1.}}
\renewcommand\paragraph{\scr@startsection{paragraph}{4}{\z@}%
      {.5ex \@plus.25ex \@minus.25ex}%
      {-1em}%
      {\normalfont\normalsize\sffamily\bfseries}}
\renewcommand\subparagraph{\scr@startsection{subparagraph}{5}{\parindent}%
      {0ex \@plus.25ex}%
      {-1em}%
      {\normalfont\normalsize\sffamily\bfseries}}
\setlist[enumerate,1]{itemsep=0pt,topsep=1ex,before={\pagebreak[1]}}
\setlist[itemize,1]{itemsep=0pt,topsep=1ex}
\newcommand{\pre}{\textsc{pre}}
\newcommand{\post}{\textsc{post}}
\newcommand{\out}{\textsc{out}}
\newcommand{\supp}[1]{\llbracket #1\kern1pt \rrbracket}
\newcommand{\cfg}[1]{\Lbag #1 \Rbag}
\DeclareMathOperator{\inj}{inj}
\newcommand{\one}[1]{\mathbbl{1}_{#1}}
\newcommand{\pr}{\mathrm{pr}}
\newcommand{\len}{\mathrm{len}}
\newcommand{\zero}{\mathbbl{0}}
\newcommand{\Gcal}{\mathcal{G}}
\newcommand{\Pcal}{\mathcal{P}}
\newcommand{\Wlog}{wlog}
\newcommand{\tref}[1]{{\hyperref[#1]{Theorem~\ref*{#1}}}}
\newcommand{\lref}[1]{{\hyperref[#1]{Lemma~\ref*{#1}}}}
\newcommand{\pref}[1]{{\hyperref[#1]{Proposition~\ref*{#1}}}}
\newcommand{\cref}[1]{{\hyperref[#1]{Corollary~\ref*{#1}}}}
\newcommand{\csref}[1]{{\hyperref[#1]{Construction~\ref*{#1}}}}
\newcommand{\eref}[1]{{\hyperref[#1]{Example~\ref*{#1}}}}
\newcommand{\fref}[1]{{\hyperref[#1]{Figure~\ref*{#1}}}}
\newcommand{\rref}[1]{{\hyperref[#1]{Remark~\ref*{#1}}}}
\newcommand{\sref}[1]{{\hyperref[#1]{Section~\ref*{#1}}}}
\newcommand{\Tower}{\textsc{Tower}}
\newcommand{\InhomTower}{\textsc{InhomTower}}
\newcommand{\InhomTowerCancel}{\textsc{InhomTowerCancel}}
\newcommand{\BigModulo}{\textsc{BigModulo}}
\newcommand{\ModuloCombined}{\textsc{ModuloCombined}}
\newenvironment{protocol}[1]{%
    \newcommand{\HeadSingle}[1]{%
        \noindent\textbf{\Title.}\hspace*{2mm} & ##1 \\[0.1cm]%
    }%
    \newcommand{\HeadMulti}[1]{\HeadSingle{{%
        \tabcolsep=0pt%
        \def\item{\\\quad}%
        \begin{tabular}[t]{l}%
        ##1%
        \end{tabular}%
    }}}%
    \newcommand{\Head}{\@ifstar{\HeadMulti}{\HeadSingle}}
    \newcommand{\States}{\Init\def\Title{States}\Head}%
    \newcommand{\Input}{\def\Title{Input}\Head}%
    \newcommand{\Output}{\def\Title{Output}\Head}%
    \newcommand{\Transitions}{\def\Title{Transitions}\Head}%
    \def\Init{%
    	\ifvmode%
    	\par%
        \begin{tabular}{ll}%
        \def\Init{}%
    }%
    \def\Done{%
        \end{tabular}%
        \def\Done{}%
    }%
    \par\medskip%
    \noindent\textbf{Protocol }\textsc{#1}\textbf{.}\nopagebreak%
    \par\medskip%
}{%
    \Done%
}
\newcommand{\Customlabel}[2]{%
   \protected@write \@auxout {}{\string \newlabel {#1}{{#2}{\thepage}{#2}{#1}{}} }%
   \hypertarget{#1}{}
}
\begin{document}
\title{The Black Ninjas and the Sniper: \\ On Robustness of Population Protocols}
\author{Benno Lossin, Philipp Czerner \orcidID{0000-0002-1786-9592}, \\
	Javier Esparza \orcidID{0000-0001-9862-4919}, 
Roland Guttenberg \orcidID{0000-0001-6140-6707},
Tobias Prehn}
\affil{\large \{lossin, czerner, esparza, guttenbe\}@in.tum.de \\
	Department of Informatics, TU München, Germany}
\date{}

\maketitle

\vspace*{-9mm}

\begin{abstract}
\noindent\textbf{Abstract.} Population protocols are a model of distributed computation in which an arbitrary number of indistinguishable finite-state agents interact in pairs to decide some property of their initial configuration.  
We investigate the behaviour of population protocols under adversarial faults that cause agents to silently crash and no longer interact with other agents.  As a starting point, we consider the property  ``the number of agents exceeds a given threshold $t$'', represented by the predicate $x \geq t$, and show that the standard protocol for $x \geq t$ is very fragile: one single crash in a computation with $x:=2t-1$ agents can already cause the protocol to answer incorrectly that $x \geq t$ does not hold. However, a slightly less known protocol is \emph{robust}: for any number $t' \geq t$ of agents, at least $t' - t+1$ crashes must occur for the protocol to answer that the property does not hold. 

\smallskip We formally define robustness for arbitrary population protocols, and investigate the question whether every predicate computable by population protocols has a robust protocol. Angluin et al. proved in 2007 that population protocols decide exactly the Presburger predicates, which can be represented as Boolean combinations of threshold predicates of the form  $\sum_{i=1}^n a_i \cdot x_i \geq t$ for $a_1,...,a_n, t \in \mathbb{Z}$ and modulo prdicates of the form  $\sum_{i=1}^n a_i \cdot x_i \bmod m \geq t $ for $a_1, \ldots, a_n, m, t \in \mathbb{N}$. We design robust protocols for all threshold and modulo predicates. We also show that, unfortunately, the techniques in the literature that construct a protocol for a Boolean combination of predicates given protocols for the conjuncts do not preserve robustness. So the question remains open.

\end{abstract}

\def\phi{\varphi}

\hfill \textit{For Joost-Pieter, Honorary Sensei of the Black Ninjas.}

\section{Introduction}

\newcommand{\Yes}{\textsc{y}}
\newcommand{\No}{\textsc{n}}

The Black Ninjas were used in \cite{BlondinEJK18} to provide a gentle introduction to population protocols, a model of distributed computation introduced by Angluin et al. in \cite{AngluinADFP06} and since then very much studied. We quote (with some changes) the first lines of the introduction to \cite{BlondinEJK18}:

\begin{quote}
The Black Ninjas are an ancient secret society of warriors, so secret that its members do not even know each other or how many
they are. When there is a matter to discuss, Sensei, the founder of the society, sends individual messages to each ninja, asking them to meet.

As it happens, all ninjas have just received a note asking them to meet in a certain Zen garden at midnight, wearing their black uniform,
in order to attack a fortress of the Dark Powers.  When the ninjas reach the garden in the gloomy night, 
the weather is so dreadful that it is impossible to see or hear anything at all. This causes a problem, because
the ninjas should only attack if they are at least 64, and there are always no-shows: Some ninjas are wounded, others are under evil spells, and others still have tax forms to fill. 
\end{quote}
Is there a way for the ninjas to find out, despite the adverse conditions,  if at least 64 of them have shown up?

\medskip\paragraph{A first protocol.}

Yes, there is a way. Sensei  has looked at the weather forecast, and the note sent to the ninjas contains detailed instructions on how to proceed after they reach the garden.  Each ninja must bring pebbles in their pockets, and a pouch initially containing one pebble. When they reach the garden, they must start  to wander \emph{randomly} around the garden.  When two ninjas with $n_1$ and $n_2$ pebbles in their pouches  happen to bump into each other, they compute $n_1+n_2$ and proceed as follows:
\begin{itemize}
  \item If $n_1 + n_2 < 64$, then one ninja gives her pebbles to the other, that is, after the encounter their pouches contain $n_1+n_2$ and $0$ pebbles, respectively.
  \item If $n_1 + n_2 \geq  64$, both ninjas put 64 pebbles in their pouches.
\end{itemize}

Formally, at every moment in time each ninja is in one state from the set $Q :=\{ 0, 1, \ldots, 64 \}$, representing the number of pebbles in her pouch. The possible transitions are
$$\begin{array}{rcllll}
n_1, n_2 & \mapsto & 
\begin{cases} 
n_1+n_2, 0 &   \text{ if  } n_1 + n_2 < 64 \\
64, 64 &   \text{ otherwise }
\end{cases}
\end{array}$$
The protocol can be visualized as a Petri net with one place for each state, and one Petri net transition for each transition of the protocol. The Petri net representation of the protocol for attack threshold 4 instead of 64 is shown on the left of \fref{fig:senseis-protocols}.

\begin{figure}[t]
    \centering
    \colorlet{colFalse}{magenta!60!red}
    \colorlet{colTrue}{cyan!60!blue}
    \colorlet{colFalseDark}{purple!80!black}
    \colorlet{colTrueDark}{blue}
    \begin{tikzpicture}[->, node distance=0.75cm, auto, thick, transform shape, scale=0.75]
        \node[transition] (ft11) at (0, 0) {};
        \node[tokens=2,place,niceredbright,fill=niceredbright!20,label={0:\textbf{{\Large 0}}}] (f0) [above = of ft11] {};
        \node[transition] (ft12) [above = of f0] {};
        \node[tokens=1,place,niceredbright,fill=niceredbright!20!white,label={0:\textbf{{\Large 3}}}] (f3) [above = of ft12] {};
        \node[tokens=2,place,niceredbright,fill=niceredbright!20!white,label={0:\textbf{{\Large 1}}}] (f1) [left = of f3] {};
        \node[place,niceredbright,fill=niceredbright!20!white,label={0:\textbf{{\Large 2}}}] (f2) [right = of f3] {};
        \node[transition] (ft33) [above = of f3] {};
        \node[transition] (ft22) [above = of f2] {};
        \node[transition] (ft23) [left = 0.37cm of ft22] {};
        \node[transition] (ft13) [above = of f1] {};
        \node[place,nicebluebright,fill=nicebluebright!20!white,label={90:\textbf{{\Large 4}}}] (f4) [above = 3cm of f3] {};

        \node (u24) [right = 0.9cm of f4] {};
        \node (u34) [left = 0.9cm of f4] {};

        \node[transition,nicebluebright] (ft14) [above = 0.5cm of u34] {};
        \node[transition,nicebluebright] (ft04) [above = 0.5cm of u24] {};
        \node[transition,nicebluebright] (ft24) [below = 0.5cm of u24] {};
        \node[transition,nicebluebright] (ft34) [below = 0.5cm of u34] {};

        \path[->, thick]
            (ft11) edge (f0)
            (ft12) edge (f0)
        ;
        \path[->, thick]
            (f1)    edge[bend right] node[left] {$2$}      (ft11)
            (ft11)  edge[bend right]                        (f2)
            (f1)    edge                                        (ft12)
            (f2)    edge                                   (ft12)
            (ft12)  edge                                   (f3)
            (f2)    edge node[right] {$2$}             (ft22)
            (ft22)  edge node[right] {$2$}        (f4)
            (f1)    edge                                          (ft13)
            (f3)    edge                                     (ft13)
            (ft13)  edge node[right] {$2$}            (f4)
            (f2)    edge                                     (ft23)
            (f3)    edge                                     (ft23)
            (ft23)  edge node[right] {$2$}                          (f4)
            (f3)  edge node[left] {$2$}                          (ft33)
            (ft33)  edge node[left] {$2$}                          (f4)
        ;
        \path[->, thick, nicebluebright]
            (f0) edge[out=25,in=-60] (ft04)
            (f1) edge[bend left] (ft14)
            (f2) edge[bend right] (ft24)
            (f3) edge (ft34)
            (ft04) edge[bend right=15] node[above] {$2$} (f4)
            (f4) edge[bend right=15] (ft04)
            (ft14) edge[bend left=15] node[above] {$2$} (f4)
            (f4) edge[bend left=15] (ft14)
            (ft24) edge[bend right=15] node[above] {$2$} (f4)
            (f4) edge[bend right=15] (ft24)
            (ft34) edge[bend left=15] node[left] {$2$} (f4)
            (f4) edge[bend left=15] (ft34)
            ;

        \node[tokens=2,place,niceredbright,fill=niceredbright!20!white,label={0:\textbf{{\Large 1}}}] (t1) at (7, 0) {};
        \node[transition] (tt1) [above = of t1] {};
        \node[tokens=2,place,niceredbright,fill=niceredbright!20!white,label={0:\textbf{{\Large 2}}}] (t2) [above = of tt1] {};
        \node[transition] (tt2) [above = of t2] {};
        \node[tokens=1,place,niceredbright,fill=niceredbright!20!white,label={0:\textbf{{\Large 3}}}] (t3) [above = of tt2] {};
        \node[transition] (tt3) [above = of t3] {};
        \node[place,nicebluebright,fill=nicebluebright!20!white,label={90:\textbf{{\Large 4}}}] (t4) [above = of tt3] {};
        
        \node (u24) [right = 0.9cm of t4] {};
        \node (u34) [left = 0.9cm of t4] {};

        \node[transition,nicebluebright] (ac1) [above = 0.5cm of u34] {};
        \node[transition,nicebluebright] (ac2) [below = 0.5cm of u24] {};
        \node[transition,nicebluebright] (ac3) [below = 0.5cm of u34] {};

        \path[->, thick]
            (t1)    edge[bend right] node[right] {$2$} (tt1)
            (tt1)   edge                               (t2)
            (tt1)   edge[bend right]                   (t1)
            (t2)    edge[bend right] node[right] {$2$} (tt2)
            (tt2)   edge                               (t3)
            (tt2)   edge[bend right]                   (t2)
            (t3)    edge[bend right] node[right] {$2$} (tt3)
            (tt3)   edge                               (t4)
            (tt3)   edge[bend right]                   (t3)
            ;
        \path[->, thick, nicebluebright]
            (t1) edge[bend left] (ac1)
            (t2) edge[bend right] (ac2)
            (t3) edge (ac3)
            (ac1) edge[bend left=15] node[above] {$2$} (t4)
            (t4) edge[bend left=15] (ac1)
            (ac2) edge[bend right=15] node[above] {$2$} (t4)
            (t4) edge[bend right=15] (ac2)
            (ac3) edge[bend left=15] node[left] {$2$} (t4)
            (t4) edge[bend left=15] (ac3)
            ;
    \end{tikzpicture}
\caption{Petri nets for the first (left) and second (right) protocols with attack threshold 4. Accepting states are shaded in blue, rejecting states in red. The number of pebbles (left) or the level (right) of a state are written next to its corresponding place in boldface. The tokens in the places show configurations reachable from the initial configuration with $5$ ninjas. The accumulation transitions are colored in blue.}
    \label{fig:senseis-protocols}
\end{figure}
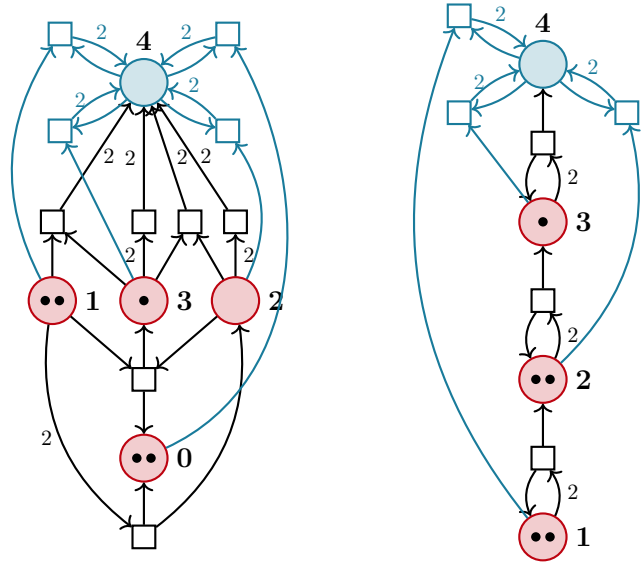

Sensei reasons as follows: If at least 64 ninjas show up, then eventually two ninjas with at least 64 pebbles between them interact. These ninjas move to state 64, and from then on word quickly spreads, and eventually all ninjas are in state 64, and stay there forever. If, on the contrary, less than 64 ninjas show up, then no ninja ever reaches state 64. In both cases, the ninjas eventually reach a \emph{consensus}: all of them are in state  64, or none is, and that consensus does not change, it is \emph{stable}.  Further, since Sensei knows that the society consists of 200 ninjas---a fact the ninjas themselves ignore---and she is good at math, she computes that after 10 minutes the consensus has been reached with high probability (ninjas move very fast!). So her final instruction to the ninjas is: Execute the protocol for 10 minutes, and then attack if you are in state 64.

\medskip\subsection{The sniper and a second protocol.}
The ninjas use the protocol to successfully  attack several fortresses of the Evil Powers. However, a spy finds out the point of their next gathering, and the Evil Powers send a sniper equipped with a night vision device and a powerful gun whose bullets can silently vaporize a ninja leaving no trace. No less than 126 ninjas show up at the gathering, but the sniper manages to prevent the attack by eliminating one single ninja! The sniper lets the ninjas wander around, executing the protocol, until they reach a configuration in which two ninjas have pouches with 63 pebbles each, and all other pouches are empty. Right before these two ninjas bump into each other, the sniper  eliminates one of them. The other ninjas, unaware of this, are left in a configuration in which one ninja has 63 pebbles in her pouch, and the pouches of the other 124 ninjas are empty. The ninjas keep interacting, but none ever reach state 64.

When Sensei finds out about the sniper, she initially despairs. But then she realizes: munitions are expensive, and the sniper can carry only a few rounds, so perhaps there is a remedy. So she searches for a \emph{robust} protocol. If $n \geq 64$ ninjas reach the garden, the sniper can only prevent an attack by eliminating  more than $n - 64$ ninjas.  Clearly, this is the best she can achieve, because if the sniper succeeds in removing more than $n-64$ ninjas, then the ninjas should not attack anyway.


Sensei is a capable leader, and she finds a robust protocol! The states are now the numbers 1 to 64. Sensei visualizes these numbers as the floors or levels of a tower with 64 levels, and visualizes a ninja in state $\ell$ as occupying the $\ell$-th level. 
Initially all ninjas are in the first level. When two ninjas bump into each other, they interact as follows:
\begin{itemize}
\item if they are both in the same level, say the $\ell$-th, and $\ell < 64$, then one of them moves to level $\ell + 1$, and the other stays at level $\ell$. 
\item if one of them is in level 64, then the other moves to level 64 too (if not yet there);
\end{itemize}

\noindent Formally, at every moment in time each ninja is in a state from the set $Q :=\{ 1, 2, ..., 64\}$. The possible transitions are
$$\begin{array}{rcllll}
\ell_1,\ell_2 & \mapsto & 
\begin{cases}  
\ell_1+1, \ell_2 &   \text{ if  }  \ell_1 = \ell_2, \text{ and } \ell_1  < 64 \\
64, 64  &  \text{ if  }  \ell_1 = 64 \text{ or } \ell_2  = 64 \\
\ell_1, \ell_2& \text{ otherwise } 
\end{cases}
\end{array}$$
Again, Sensei asks the ninjas to execute the protocol for a certain time, and then attack if they are in level 64.
The protocol can again be visualized as a Petri net, shown on the right of \fref{fig:senseis-protocols}.

Sensei first shows that the protocol works correctly in the absence of the sniper.  Eventually, each level is populated by at most one ninja; indeed, any two ninjas in the same level eventually interact and one of them moves up. Therefore, if there are at least 64 ninjas, then one of them eventually reaches the top of the tower, and eventually brings all other ninjas to the top. 

To prove that the protocol is robust, Sensei observes that the argument above works not only for the initial configuration with all ninjas in the ground level, but \emph{for any configuration with at least 64 ninjas}. Indeed, in any such configuration either one ninja is already at the top, or , by the pigeonhole principle, eventually one ninja moves one level up. Therefore, unless the sniper brings the number of ninjas below 64, the ninjas attack.

\medskip\paragraph{Sensei's question.}
Sensei is not satisfied yet.  At some meetings, the ninjas must decide by majority if they attack or not, that is, they attack only if they are at least 64 and a majority wants to attack. Sensei has a majority protocol for this; in her jargon she says that the protocol \emph{decides} the predicate $Q_2(n_\Yes, n_\No) \Leftrightarrow n_\Yes + n_\No \geq 64  \wedge n_\Yes > n_\No$ (instead of the predicate $Q_1(n) \Leftrightarrow n \geq 64$ of the first  protocol).  She also has protocols for qualified majority (i.e., the ninjas attack only if at least 2/3  of them are in favour, corresponding to the predicate $Q_2(n_\Yes, n_\No)\Leftrightarrow 2 n_\Yes \geq 3 n_\No$), and for whether the number of ninjas showing up is a multiple of 7, corresponding to  $Q_3(n) \Leftrightarrow \exists k \colon n = 7k$.~\footnote{Ninjas are superstitious.} In fact, Sensei has read \cite{AngluinAER07}, in which Angluin et al. introduced this kind of protocols and showed how to construct a protocol  for any predicate expressible in Presburger arithmetic. She has even read the improved version \cite{CzernerGHE24}, which shows how to construct protocols slightly faster than those of \cite{AngluinAER07} and, more importantly, with polynomially instead of exponentially many states in the size of the Presburger formula (assuming it is quantifier-free). However, these protocols are not robust. For example, the protocol of \cite{CzernerGHE24} for the predicate $x \geq 64$ is the first protocol of the introduction. Since these protocols are very dangerous  in the presence of snipers, Sensei has the following question:

\begin{quote}
Is there a robust protocol for every Presburger predicate?
\end{quote}
We do not know yet. In this paper we give a positive partial answer. We construct robust protocols for \emph{threshold predicates} of the form $\sum_{i=1}^k a_i x_i \geq b$, where $a_1, \ldots, a_k, b \in \mathbb{Z}$, and for \emph{modulo predicates} of the form $\sum_{i=1}^k a_i x_i \equiv_c b $, where $a_1, \ldots, a_k \in \mathbb{Z}$, $c \in \mathbb{N}$, and $0 \leq b \leq c-1$. 

\medskip\paragraph{Related work.} The literature on fault-tolarence and robust distributed algorithms is very large, and so we consider only work on population protocols or closely related models. In \cite{Delporte-GalletFGR06}, Delporte-Gallet \emph{et al.} initiated the study of population protocols with essentially the same fault model as ours. The paper provides a construction that, loosely speaking, yields robust protocols  with respect to a given number of failures, independent of the number of agents. The construction we give for modulo protocols is an immediate consequence of this idea. Delporte-Gallet and collaborators have published many other papers on fault-tolerant consensus and leader election algorithm in different kinds of message-passing networks. 

There also exists previous work on other fault models. In \cite{GuerraouiR09}, Guerraoui and Ruppert study community protocols, a model that extends population protocols by assigning unique identities to agents.  They present a universal construction that, given a predicate in NSPACE($\log n)$, outputs a community protocol deciding it. The construction is robust with respect to Byzantine failures of a constant number of agents. More recently,  Alistarh \emph{et al.} have studied robustness of population protocols and chemical reaction networks with respect to leaks \cite{AlistarhDKSU17,Alistarh0U21}. Loosely speaking, this is a failure model in which agents can probabilistically move to adversarially chosen states.

\medskip\paragraph{Structure of the paper.} Section \ref{sec:prelims} introduces the syntax of population protocols and their semantics in the presence of a sniper. Section~\ref{sec:robustness} formalizes the notion of robustness, and formally proves that the first and second of Sensei's protocols are not robust and robust, respectively. Sections~\ref{sec:threshold} and \ref{sec:modulo} give robust population protocols for all threshold and modulo predicates, respectively. Section~\ref{sec:boolean-comb} explains the problem with robustness of the standard construction for Boolean combinations.

\section{Generalized Protocols and Population Protocols with Sniper}
\label{sec:prelims}

\newcommand{\sn}{\hookrightarrow}
\newcommand{\sni}[1]{\hookrightarrow^{#1}}
\newcommand{\tosn}{\Rightarrow}
\newcommand{\tosni}[1]{\Rightarrow^{#1}}
\newcommand{\tostar}{\to^{*}}
\newcommand{\snstar}{\hookrightarrow^{*}}
\newcommand{\tosnstar}{\Rightarrow^{*}}
\newcommand{\tol}{\textit{Tol}}
\newcommand{\intol}{\textit{InTol}}

After some basic notations, we introduce the syntax and the semantics in the presence of a sniper of population protocols. For convenience, we introduce the model as a particular instance of a slightly more general model called generalized protocols.

\medskip\paragraph{Preliminaries: multisets and intervals.} Given a finite set $Q$, a \emph{multiset} is a function $M: Q \rightarrow \N$. We denote the set of multisets over $Q$ by $\N^Q$. Multisets can be added, subtracted and compared componentwise. We define the \emph{size} of a multiset $M$ as $|M| := \sum_{q \in Q} M(q)$. The set of multisets of size $k$ is denoted by $\N^Q_k$. The \emph{support} of a multiset is $\supp{M} := \{q \in Q: M(q) \neq 0\}$. 
We use a set-like notation to write multisets explicitly: $\cfg{x, y, y, 3 \cdot z}$ denotes the multiset $M$ with $M(x) = 1$, $M(y) = 2$, $M(z) = 3$ and otherwise $0$. When the set $Q$ contains natural numbers, we write the numbers from $Q$ in bold to differentiate them from their multiplicity: $\cfg{\mathbf{1}, 2\cdot\mathbf{3}}$.

We write $[a, b)$ to denote the set $\{x \in \mathbb{Z}: a \leq x < b\}$. We define the length of an interval as $\len([a, b)) := b - a$.


\newcommand{\accept}{1}\newcommand{\reject}{0}
\newcommand{\Qpos}{Q_+}
\newcommand{\Qneu}{Q_=}
\newcommand{\Qneg}{Q_-}

\paragraph{Generalized Protocols:  Syntax.} A \emph{generalized protocol} is a tuple: $\Pcal = (Q, \delta, I, O)$	where
	\begin{itemize}
		\item $Q$ is a finite set of states,
		\item $\delta \subseteq \N^Q_{2} \times \N^Q_{2}$ is the transition relation, 
		\item $I \subseteq Q$ is the set of initial states,
		\item $O: \N^Q \rightarrow \{\accept, \reject, \bot\}$ the output function.
	\end{itemize}
    Elements of $\N^Q$ are called \emph{configurations} and elements of $\delta $ \emph{transitions}. Configurations consist of \emph{agents}, which are in a specific state $q\in Q$. We also call agents \emph{ninjas}. Intuitively,  the output function assigns to each configuration either a decision (\accept\ or \reject), or no decision ($\bot$). We let $p, q \mapsto p', q'$ denote that $(\cfg{p,q}, \cfg{p',q'}) \in \delta$. The \emph{preset} and \emph{poset} of a transition $t= (p, q \mapsto p', q')$ are
$\pre(t)   := \cfg{p, q}$ and $\post(t)  := \cfg{p', q'}$. We call a transition $t$ with $\pre(t) = \post(t)$ \emph{silent}.

Additionally we assume that for every pair of states $p, q \in Q$, there exists a transition $t$ with $\pre(t) = \cfg{p, q}$. If the protocol does not define such a transition explicitly, we add the respective identity transition (such a transition is silent and does not change the behavior of the protocol).

\medskip\paragraph{Steps, snipes, and moves.} 
Let $C, D \in \N^Q$ be two configurations.
\begin{itemize}
    \item We write $C \to D$ if $C = D$ or there exists some transition $t$ s.t.\ $\pre(t) \leq C$ and $D = C + \post(t) - \pre(t)$. We call $C \to D$ a \emph{move} of the protocol.

    A configuration $C$ is \emph{terminal} if there is no configuration $D \neq C$ such that $C \rightarrow D$.
\item We write $C \sn D$ if $C = D + S$ for some configuration $S$ such that $|S| = 1$. 
We call $C \sn D$ a  \emph{snipe}.   
\item We write $C \tosn D$ if either $C \to D$ or $C \sn D$, and call $C \tosn D$ a \emph{step}. That is, a step is either a move of the protocol, or a snipe. 
\item We let $\tostar$, $\snstar$, and $\tosnstar$ denote the reflexive and transitive closure of $\to$, $\sn$ and $\tosn$, respectively.
\item If $C \tostar D$ and $E \leq D$, we say that $C$ \emph{covers} $E$.
\end{itemize}


\medskip\paragraph{$i$-executions, $i$-reachability, fairness.}  Fix $i \geq 0$.  Loosely speaking, an $i$-execution is an infinite sequence of steps containing at most $i$ snipes. Formally, an \emph{$i$-execution} is an infinite sequence $\pi = \{C_j\}_{j \in \N}$ of configurations such that $C_j \tosn C_{j+1}$ for every $j \geq 0$, and $C_j \sn C_{j+1}$ for at most $i$ values of $j$. The \emph{output} of $\pi$ is
\[\out(\pi) := \begin{cases}
		\accept    & \text{if } \exists n \in \N: \forall i \geq n: O(\pi(i)) = \accept \\
		\reject   & \text{if } \exists n \in \N: \forall i \geq n: O(\pi(i)) = \reject \\
		\bot & \text{else}
\end{cases}\]
An $i$-execution $\pi$ is \emph{fair} if{}f for every configuration $D \in \N^Q$ the following holds: if $D$ can be reached from infinitely many configurations of $\pi$ by executing sequences of moves, then $D$ appears in $\pi$. It is easy to see that fairness can be equivalently defined as follows: if $D$ can be reached \emph{in one move} from infinitely many configurations of $\pi$, then $D$ appears \emph{infinitely often} in $\pi$. Formally, for every configuration $D$, if  $\pi(j) \rightarrow D$ for infinitely many $j \in \N$, then $\pi(j) = D$ for infinitely many $j \in \N$.

\medskip\paragraph{Predicate decided by a generalized protocol.}
Fix $i \geq 0$. Given a configuration $C \in \N^Q$ we define its $i$-output, denoted $\out_i(C)$, as follows:
\[\out_i(C) := \begin{cases}
		\reject    & \text{if } \forall \pi \text{ fair $i$-execution}, \pi(0) = C: \out(\pi) = \reject \\
		\accept    & \text{if } \forall \pi \text{ fair $i$-execution}, \pi(0) = C: \out(\pi) = \accept \\
		\bot & \text{else}
	\end{cases}\]
A generalized protocol $i$-decides a predicate $\phi \colon \N^I \to \{\accept, \reject\}$  if $\out_i(C)= \phi(C)$ for every $C \in \N^I$. 

\medskip\paragraph{Consensus output functions, population protocols.} 
An output function $O: \N^Q \rightarrow \{\accept, \reject\}$ is a \emph{consensus function}
if there is a partition of $Q$ into two sets $\Qpos$ and $\Qneg$ of \emph{accepting} and \emph{rejecting} states such that for every configuration $C$ we have: $O(C)=\accept$ if$\supp{C} \subseteq \Qpos$, $O(C) = \reject$ if $\supp{C} \subseteq \Qneg$, and  $O(C)=\bot$ otherwise.  A generalized protocol is a \emph{population protocol} if its output function is a consensus function.

Intuitively, if we interpret $\Qpos$  and $\Qneg$ as the set of states in which ninjas have accept and reject \emph{opinions}, respectively, then a configuration of a population protocol has output $b \in \{\accept, \reject\}$ if{}f all ninjas currently have the same opinion, i.e., have currently reached a \emph{consensus}, and $\bot$ otherwise.

%
\begin{example}
\label{ex:angluin-prot}
We restate the first protocol from the introduction using this formalism for an arbitrary threshold $t$:

\begin{protocol}{Pebbles}
\States{$Q = [0, t]$}
\Input{$I = \{1\}$}
\Output{$O(x) = 1$ if $ x = t$, and $O(x) = 0$ otherwise}
\Transitions{$\delta  = \begin{array}[t]{l}
 \{ (x, y) \mapsto (x + y, 0): x, y \in Q \land x + y < t \}  \; \cup  \\
\{ (x, y) \mapsto (t, t): x, y \in Q \land x + y \geq t\} 
\end{array}$}
\end{protocol}
\end{example}

\section{Robustness}
\label{sec:robustness}

%

Recall the first protocol from the introduction. Even if the initial number of ninjas is 126, the sniper can still prevent the attack  by taking one single ninja down. However,  this is the case only because the sniper can intervene at any moment. Indeed, if the sniper can only intervene before the protocol starts, she must take at least 62 ninjas down to prevent the attack.  We say that the initial configuration with 125 ninjas has an \emph{initial tolerance} of 62, but a \emph{global tolerance} of 1.  It is easy to see that the initial tolerance for an arbitrary configuration with $n \geq 64$ is $n - 64$, while the global tolerance is $\lceil \frac{n}{63} \rceil - 2$.  However, in the second protocol, the initial and global tolerance of any initial configuration with $n \geq 64$ ninjas coincide, and are equal to $n - 64$. 

Observe that the initial tolerance of an initial configuration with $n$ ninjas is $n - 64$ for \emph{any} protocol deciding whether $n \geq 64$ holds. Intuitively, it gives an upper bound on how fault-tolerant a protocol for this predicate can be. On the contrary, the global tolerance depends on the protocol. This suggests to define the class of robust protocols as those protocols whose global tolerance is equal to the initial tolerance of all initial configurations. We proceed to formalize this notion.
\begin{definition}
Let $P$ be a well-specified generalized protocol, and let $C \in \N^I$ be an initial configuration of $P$. 
\begin{itemize}
\item The \emph{initial tolerance} of $C$, denoted $\intol(C)$, is the largest number $i$ such that $\out_0(C) = \out_0(D)$ for every configuration $D$ with $C \sni{i} D$. 
\item The \emph{global tolerance} of $C$, denoted $\tol(C)$, is the largest number $i$ such that $\out_i(C) = \out_0(C)$.
\end{itemize}
$P$ is \emph{robust in} $C$ if $\tol(C) = \intol(C)$. We call $P$ \emph{robust} if it is robust in every initial configuration.
\end{definition}
We note a couple of useful facts about robustness:
\begin{remark}\;
    \label{remark:robust-facts}
\begin{itemize}
    \item Negating a population protocol via changing accepting states to be rejecting states and vice-versa, the protocol still remains robust. This is because neither the initial tolerance nor the global tolerance change.
    \item Sniping agents in initial states that have never interacted cannot change the output when considering snipes less than the initial tolerance. This follows directly from the definition: when sniping a agent in an initial state, it is as if the initial configuration was different to begin with.
\end{itemize}
\end{remark}

\subsection{Examples}
We  show that with our notion of robustness, the original population protocol for threshold
predicates by Angluin et al.~\cite{AngluinADFP06} is not robust.
\begin{proposition}
	The protocol of \eref{ex:angluin-prot} is not robust for any $t \geq 3$.
\end{proposition}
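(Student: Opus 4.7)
The plan is to exhibit, for each $t \geq 3$, a single initial configuration $C$ with $\tol(C) < \intol(C)$. Following the sniper scenario from the introduction, I take $C = (2t-2) \cdot \cfg{\mathbf{1}}$, the configuration with $n = 2t - 2$ ninjas. I will show $\intol(C) \geq t - 2$ and $\tol(C) = 0$; since $t \geq 3$ implies $t - 2 \geq 1 > 0$, this already establishes non-robustness at $C$.

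The inequality $\intol(C) \geq t - 2$ is the routine direction: sniping at most $t - 2$ initial agents yields a configuration $C' = k \cdot \cfg{\mathbf{1}}$ with $k \geq t$, and a standard analysis (the intuitive argument from the introduction, rephrased as a fair-execution argument) shows that in any fair $0$-execution from such $C'$ the combining rule eventually produces a pair summing to at least $t$, firing $(x, y) \mapsto (t, t)$ and then spreading state $t$ to every agent. Hence $\out_0(C') = \accept = \out_0(C)$.

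For $\tol(C) = 0$, I construct a single fair $1$-execution $\pi$ from $C$ with $\out(\pi) = \reject$, which already forces $\out_1(C) \neq \accept = \out_0(C)$. Partition the $2t - 2$ initial agents into two groups of size $t - 1$. Within each group, apply the combining rule $(x, 1) \mapsto (x + 1, 0)$ successively for $x = 1, 2, \ldots, t - 2$; each step is legal since $x + 1 \leq t - 1 < t$. This drives $C$ to $C^* = \cfg{2 \cdot \mathbf{(t-1)}, (2t - 4) \cdot \mathbf{0}}$. The sniper then removes one of the two agents in state $t - 1$, leaving $D = \cfg{\mathbf{(t-1)}, (2t - 4) \cdot \mathbf{0}}$, in which the total pebble count has dropped to $t - 1$. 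I extend the execution by the infinite suffix $D, D, D, \dots$ to obtain $\pi$.

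The main obstacle, such as it is, is verifying that this suffix really is fair and that $\out(\pi) = \reject$; both consequences follow from a single observation. With only $t - 1 < t$ pebbles in play, no pair of agents in $D$ can sum to $t$ or more, so the rule $(x, y) \mapsto (t, t)$ is never applicable in $D$, and the only other available moves are the silent ones $(t-1, 0) \mapsto (t-1, 0)$ and $(0, 0) \mapsto (0, 0)$. Thus $D$ is the unique configuration reachable from $D$ in one move, the silent suffix is trivially fair, and $\out(\pi) = O(D) = \reject$ because $\supp{D} \subseteq \{\mathbf{0}, \dots, \mathbf{t-1}\}$. This yields $\tol(C) = 0 < t - 2 \leq \intol(C)$, completing the proof.
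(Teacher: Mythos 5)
Your proof is correct and follows essentially the same route as the paper's: exhibit an initial configuration with positive initial tolerance, run the protocol until the pebbles are concentrated in agents in state $t-1$, snipe one such agent so that fewer than $t$ pebbles remain, and observe that the resulting configuration is terminal and rejecting, so the fair $1$-execution rejects. The only difference is the concrete witness ($2t-2$ agents with two agents at $t-1$, echoing the introduction's sniper scenario, versus the paper's minimal choice of $t+1$ agents reaching $\cfg{\mathbf{t-1},\mathbf{2},(t-1)\cdot\mathbf{0}}$), which is immaterial.
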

\begin{proof}
    Assume the protocol is robust. Let $q:=t-1\in Q$ denote the last non-accepting state and consider the following fair $1$-execution:
    \[\cfg{(t+1)\cdot \mathbf{1}} \tostar \cfg{\mathbf{q}, \mathbf{2}, (t - 1) \cdot \mathbf{0}} \sn \cfg{\mathbf{2}, (t - 1) \cdot \mathbf{0}}\to ...\]
The first configuration is an initial configuration with an initial tolerance of $1$. Without sniping, it is accepted. However, the last configuration is terminal and rejecting. This contradicts the assumption that the protocol is robust.
\end{proof}
\begin{remark}
We can say even more about the robustness of this protocol. If the sniper is only permitted to snipe a single ninja, then we know that they can at worst snipe a ninja with $t - 1$ pebbles. Thus if the input configuration has $t - 1$ additional ninjas, then the output is still correct. In other words, if the initial tolerance of a configuration is $i$, then the global tolerance of the protocol is $\lfloor \frac{i}{t - 1} \rfloor$.
\end{remark}
In this paper, we focus our attention on robust protocols and as such we do not need a fine-grained analysis of protocols that fail to be robust. But if one wants to analyze non-robust protocols, it is useful to define robustness with a parameter that is the ratio of initial and global tolerance.

We give the formal definition of Sensei's second protocol, which we call the \Tower{} protocol.

\begin{protocol}{Tower}
\States{$Q = [1, t]$}
\Input{$I = \{1\}$}
\Output{$O(x) = \accept \text{ if{}f } x = t$}
\Transitions{$\delta  =  \begin{array}[t]{ll} \{ (x, x) \mapsto (x, x + 1): x \in Q \land x < t\} \; \cup \\
     \{(x, t) \mapsto (t, t): x \in Q,x<t\} \end{array}$}
\end{protocol}
    
\smallskip\noindent Let us prove that the \Tower{} protocol is robust.

\begin{proposition}\label{thm:tower-robust}
The \Tower{} protocol  is robust for every $t$.
\end{proposition}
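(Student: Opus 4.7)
The overall plan is to analyze both tolerances of each initial configuration $C_n \DefEq \cfg{n \cdot \mathbf{1}}$ and show they coincide, splitting into the cases $n \geq t$ and $n < t$. In both cases $\tol(C_n) \leq \intol(C_n)$ holds automatically: for any $i > \intol(C_n)$ there is a witness $D$ with $C_n \sni{i} D$ and $\out_0(D) \neq \out_0(C_n)$, and the fair $i$-execution that first performs those $i$ snipes and then extends fairly from $D$ proves $\out_i(C_n) \neq \out_0(C_n)$. So only the reverse inequality needs proof.

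For $n \geq t$, snipes from $C_n$ yield $\cfg{(n-i)\cdot \mathbf{1}}$, which eventually accepts iff $n - i \geq t$, giving $\intol(C_n) = n - t$. To show $\tol(C_n) \geq n - t$ I prove the sub-claim \emph{that for every configuration $D$ with $|D| \geq t$, every fair snipe-free execution starting in $D$ has output $\accept$}. Granting this, a fair $(n-t)$-execution $\pi$ contains at most $n - t$ snipes, so beyond some index $T$ no snipe occurs; the tail from $\pi(T)$ is a fair $0$-execution starting from a configuration of size at least $t$, and the sub-claim yields $\out(\pi) = \accept$. For the sub-claim itself I would use the potential $\Phi(C) \DefEq \sum_{\ell=1}^{t} \ell \cdot C(\ell)$: every non-silent move strictly increases $\Phi$ (promotions $(\ell, \ell) \mapsto (\ell, \ell+1)$ add $1$, and consensus transitions $(x, t) \mapsto (t, t)$ with $x < t$ add $t - x \geq 1$), while $\Phi$ is bounded above by $t \cdot |D|$. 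Hence only finitely many non-silent moves can occur, and after the last one fairness forces no non-silent move to remain enabled (else it would fire infinitely often, contradicting the bound); so the resulting configuration $D^{*}$ is terminal. An inspection of the transitions shows that terminality forces $D^{*}(x) \leq 1$ for each $x < t$ and forbids the coexistence of an agent at level $t$ with any agent at a lower level, so the only terminal configuration with $|D^{*}| \geq t$ is the all-at-$t$ configuration, giving $O(D^{*}) = \accept$.

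For $n < t$ I claim $\intol(C_n) = \tol(C_n) = +\infty$, by showing that no execution from $C_n$ (with any number of snipes) ever populates level $t$. The invariant $\sum_{k \geq \ell} C(k) \leq \max(0,\, n + 1 - \ell)$ holds at $C_n$ and is preserved under snipes (which can only decrease the left-hand side) and under moves: the promotion $(x, x) \mapsto (x, x+1)$ needs $C(x) \geq 2$, which via the invariant at $\ell = x$ forces $x \leq n - 1$, and a direct computation confirms invariance at $\ell = x + 1$; the consensus transition needs $C(t) \geq 1$, forbidden since the invariant at $\ell = t$ gives $C(t) \leq \max(0, n + 1 - t) = 0$. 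Hence $C(t) = 0$ throughout, the output is always $\reject$, and any number of snipes is tolerated.

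The main obstacle will be the sub-claim in the first case: one must carefully combine boundedness of $\Phi$ with fairness to deduce that the execution actually reaches a terminal configuration, and then characterize the terminal configurations of size at least $t$ as the unique all-accepting one. The second case is by contrast a routine invariant-maintenance argument.
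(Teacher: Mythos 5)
Your proof is correct and takes essentially the same route as the paper's: for $|C|\ge t$ it uses the same potential $\sum_{\ell}\ell\cdot C(\ell)$ to get termination, then the same terminal-configuration/pigeonhole analysis showing every configuration of size at least $t$ stabilizes to all-$t$, which covers both $\out_0$ and the sniped executions. The only (cosmetic) difference is in the case $|C|<t$, where the paper argues that $C$ cannot cover state $t$ and invokes monotonicity of the transition relation to handle snipes, while you establish the same unreachability of level $t$ via an explicit invariant preserved under moves and snipes.
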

\begin{proof}
    Let $C \in \mathbb{N}^I$ be an initial configuration. If $|C| < t$, then $\out_0(C) = 0$ and $C$ cannot \emph{cover} the accepting state, i.e.\ it cannot put an agent into $t$. Since the transition relation is monotonic ($C_1\tostar C_2$ implies $C_1+D\tostar C_2+D$ for all $C_1,C_2,D\in\N^Q$), sniping cannot lead to reaching the accepting state. Therefore $\out_i(C) = 0$ for all $i$.

    If $|C| \geq t$, then we know that $|C| \geq t + \intol(C)$ by definition of the initial tolerance.
	To show $\intol(C) = \tol(C)$, we show that the protocol accepts any configuration with at least $t$ agents.

	First, we note that each transition increases the value of $\sum_{q\in Q}q\cdot C(q)$, so only finitely many transitions can be executed. Thus the protocol eventually reaches a terminal configuration $D$ with $|D| \geq t$. If there is an agent in state $t$, then all other agents must be in state $t$ as well — otherwise the second transition could be executed. So $D$ is accepting in this case. If there is no agent in the $t$ state, then we have at least $t$ agents distributed among $t-1$ states, so there is some state $x<t$ with at least two agents. But this is a contradiction to $D$ being terminal, since those two agents can initiate a transition.
\end{proof}

\section{Robust Threshold Protocols}
\label{sec:threshold}

In this section we construct robust protocols for \emph{threshold predicates}, that is, predicates of the form,
\[\phi(x) = \bigg( \sum_{i=1}^n a_i \cdot x_i \geq t \bigg)\]
\noindent  where $a_1,...,a_n, t \in \mathbb{Z}$. We proceed in three steps. \sref{subsec:inhomogeneous} constructs robust protocols for threshold predicates with  positive coefficients $a_i$ and an arbitrary threshold $t$. \sref{subsec:genmajority} does the same for predicates with arbitrary coefficients and threshold $1$; we call them \emph{generalized majority} predicates, because they generalize the majority predicate $x-y \geq 1 \iff x - y > 0$. Finally, \sref{subsec:combination} combines the protocols of the two previous sections to present protocols for arbitrary threshold predicates.

\subsection{Threshold predicates with positive coefficients}
\label{subsec:inhomogeneous}

In this section we study the case $a_i > 0$. We can \Wlog{} assume that \(t \geq a_i\) for all \(i\): If \(t<0\), then \(\varphi\) is equivalent to true, and we are done. Otherwise, since all coefficients \(a_i>0\), we replace every \(a_i\) by \(min(a_i, t)\) to obtain a new predicate \(\varphi'\): Clearly if some ninja for a coefficient \(a_i\) which was decreased occurs, then both \(\varphi\) and \(\varphi'\) are automatically true. Since changing a coefficient which does not occur does not change satisfaction either, \(\varphi\) and \(\varphi'\) are equivalent.

The Black Ninjas have recently recruited a cohort of rookie ninjas. In battle, two veterans are as good as three rookies. When the ninjas meet, the question to decide is whether their battle power is at least 192 (the power of 64 veterans). In other words, the ninjas must conduct a protocol to decide if  $3 v + 2 r \geq 192$, where $v$ and $r$ are the numbers of veterans and rookies, respectively.

The protocol of \cite{AngluinADFP06} for this case is an easy generalization of the protocol for $n \geq 64$. Each ninja brings a pouch, but veterans put three pebbles in it, while rookies only put two. Otherwise the protocol works as before: when ninjas interact, one gives her pebbles to the other, until two interacting ninjas observe that their joint number of pebbles is at least 192. The protocol fails to be robust, for the same reason as the previous one.

Sensei comes up with a generalization of the \Tower{} protocol. She visualizes again a \Tower{} with 192 floors. However, veterans now occupy three consecutive floors, e.g.\ floors 100, 101, and 102, and rookies occupy two consecutive floors, e.g.\ 101 and 102 (see \fref{fig:tower-step}). If two ninjas that interact share at least one floor, then the ninja with the higher lowest floor moves one floor up. For example, if the two ninjas just mentioned interact, then the rookie moves one floor up, occupying now the floors 102 and 103 (if both ninjas have the same lowest floor, then any one of them moves up). The other ninja stays where she is. Initially, all veterans occupy the floors 0, 1, and 2, and all rookies occupy the floors 0 and 1.

Sensei's reasoning is similar to the previous case: eventually each floor is occupied by at most one ninja (although she has to think a bit to establish this). Therefore, a ninja eventually reaches the top floor if{}f $3 v + 2 r \geq 192$. As before, a ninja that reaches the top floor tells any other ninja she interacts with to move to the top floor too. We call this protocol \InhomTower, and proceed to define it formally.

\begin{protocol}{\InhomTower}
The states of the protocol are intervals of natural numbers. For convenience we use semi-closed intervals $[s, e)$. A ninja in state $[s, e)$ occupies the floors $s, s+1, \ldots, e-1$. The initial states are the intervals starting at zero. Formally we set \medskip

\States{$Q = \{[s,s+a_i): i\in [1,n]\wedge0 \leq s\wedge s+a_i \leq t +1 \}$}
\Input{$I = \{[0, a_i): i \in [1, n]\}$}
\Output{$O([s, e]) = \accept \text{ if{}f } e = t + 1$}
\Transitions{}
\Done

\begin{itemize}
\item For every two non-disjoint intervals $[s_1, e_1), [s_2, e_2)  \in Q$ such that $s_1 \leq s_2$ and $e_1, e_2 < t+1$ the second interval moves one step “upwards”:
\[ [s_1, e_1), [s_2, e_2) \mapsto [s_1, e_1), [s_2+1, e_2 +1) \TraName{step}\]
\item For two intervals $[s_1, t + 1), [s_2, e_2) \in Q$ with $e_2\le t$, i.e.\ the first interval has reached the top, the second moves to the top as well.
\[[s_1, t + 1), [s_2, e_2) \mapsto [s_1, t + 1), [t + 1 - (e_2-s_2), t + 1) \TraName{accum}\]
\end{itemize}
\end{protocol}

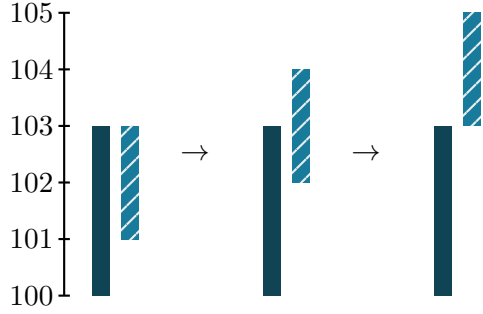
\begin{figure}
    \centering
    \begin{tikzpicture}[thick, scale=0.75]
    	\tikzstyle{agent} = [niceblue];
    	\tikzstyle{agentother} = [preaction={fill, nicebluebright},pattern color=nicebgblue,pattern={Lines[angle=45,distance=5pt]}];
        \draw (-0.5,0) -- (-0.5,5);
        \foreach \i in {0,...,5}
            \node[anchor=east] at (-0.5,\i) {\(10\i\)};
        \foreach \i in {0,...,5}
            \draw[black] (-0.6,\i) -- (-0.4,\i);
        
        \fill[agent] (0,0) -- ++(3mm,0) -- ++(0,3) -- ++(-3mm,0);
        \fill[agentother] (0.5,1) -- ++(3mm,0) -- ++(0,2) -- ++(-3mm,0);

        \node at (1.8,2.5) {$\rightarrow$};

        \fill[agent] (3,0) -- ++(3mm,0) -- ++(0,3) -- ++(-3mm,0);
        \fill[agentother] (3.5,2) -- ++(3mm,0) -- ++(0,2) -- ++(-3mm,0);
        
        \node at (4.8,2.5) {$\rightarrow$};

        \fill[agent] (6,0) -- ++(3mm,0) -- ++(0,3) -- ++(-3mm,0);
        \fill[agentother] (6.5,3) -- ++(3mm,0) -- ++(0,2) -- ++(-3mm,0);
        
    \end{tikzpicture}
    \caption{Two steps using the \TraRef{step} transition between a veteran (dark blue) and a rookie (light blue, striped) in the \InhomTower{} protocol.}
    \label{fig:tower-step}
\end{figure}

\begin{theorem}
	\label{thm:inhom-tower-correct-and-robust}
	\InhomTower{} computes $\phi$ and is robust.
\end{theorem}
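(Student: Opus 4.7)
The plan is to first prove correctness of \InhomTower{} in the absence of snipes by classifying all terminal configurations, and then deduce robustness by observing that this classification is phrased entirely in terms of the current configuration, so it applies equally well to configurations obtained after snipes.

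The preparatory observations I would establish are: (i) the length $e - s$ of an agent's interval is invariant under both \TraRef{step} and \TraRef{accum}, so each agent retains its ``type'' throughout execution and the total weight $W(C) := \sum_j (e_j - s_j)$ is preserved by the protocol (and reduced by exactly $a_i$ when a type-$i$ agent is sniped); and (ii) the measure $\sum_j (t+1 - s_j)$ is a well-founded quantity that strictly decreases on every non-silent transition, so fairness forces every fair execution to reach a terminal configuration. The central lemma is then a structural classification of terminal configurations: $D$ is terminal if{}f either (a) every agent is at the top, $e_j = t+1$ for all $j$ (output $\accept$), or (b) no agent is at the top and the agents' intervals are pairwise disjoint, in which case they fit into the non-top floors $[0,t)$, bounding $W(D)$ against the threshold and forcing output $\reject$. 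The proof is a short case analysis: if some but not all agents are at the top then \TraRef{accum} applies, and if two non-top intervals overlap then \TraRef{step} applies after relabelling so that $s_1 \leq s_2$. Combined with (ii), this immediately yields that $\out_0(C)$ equals $\phi(C)$ for every initial $C$.

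For robustness, the crucial point is that the terminal classification refers only to the current configuration and makes no use of reachability from an initial configuration. Fix an initial $C$ with $\intol(C) = i$ and any $C'$ with $C \sni{i} C'$. By observation (i) and the definition of $\intol(C)$, the weight $W(C')$ still lies on the same side of the threshold as $W(C)$, so the classification applied to fair $0$-executions starting from $C'$ gives $\out_0(C') = \out_0(C)$, hence $\out_i(C) = \out_0(C)$, i.e.\ $\tol(C) \geq \intol(C)$; the reverse inequality is immediate from the definitions. I expect the main technical obstacle to be the terminal-structure lemma, and in particular verifying that \TraRef{step} is actually enabled whenever two distinct non-top intervals overlap: one must check that the resulting state $[s_2+1, e_2+1)$ still lies in $Q$, which is where the WLOG assumption $a_i \leq t$ from the start of \sref{subsec:inhomogeneous} becomes essential.
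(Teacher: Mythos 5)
Your weight invariant, termination measure, and the analysis of terminal configurations essentially reproduce the paper's treatment of the accepting case, but there are two genuine gaps. The first is the rejecting direction of correctness: the classification of terminal configurations, together with invariance of the weight $W(C)=\sum_q \len(q)C(q)$, does \emph{not} ``immediately yield'' $\out_0(C)=\phi(C)$. Your case (b) handles top-free terminal configurations, but nothing in your argument excludes that an initial configuration with $W(C)<t$ reaches an \emph{all-at-top} terminal configuration, and the weight cannot exclude it, since an all-at-top configuration may have arbitrarily small weight (a single agent in state $[t+1-a_i,\,t+1)$ is terminal and accepting). What is needed is a reachability invariant, and this is exactly the heart of the paper's Case 1: along any execution from an initial configuration the set of occupied levels is contiguous and contains level $0$ (because \TraRef{step} requires overlapping intervals and \TraRef{accum} requires an agent already at the top), so a configuration of weight $<t$ can never cover an accepting state. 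Your proposal omits this invariant entirely.

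The second gap is the robustness deduction. You reduce $i$-executions to ``snipe $i$ agents first, then run a fair $0$-execution from $C'$ with $C\sni{i}C'$'' and conclude $\out_i(C)=\out_0(C)$; but snipes may be interleaved with moves, and the whole point of the gap between initial and global tolerance (cf.\ the Pebbles protocol) is that late snipes can be strictly more harmful than early ones, so this reduction begs the question. On the accepting side the argument is repairable with your own observation (i): lengths are invariant, so the weight of the terminal configuration of a fair $k$-execution equals that of the initial configuration obtained by deleting the sniped agents at the start, which by definition of $\intol(C)$ still lies on the accepting side; one then applies the terminal classification to that configuration directly --- this is precisely how the paper argues, reasoning about $k$-reachable terminal configurations rather than snipe-first executions. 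On the rejecting side the repair again needs the missing contiguity invariant, extended to executions with snipes via monotonicity (removing agents only shrinks what is coverable), as in the paper. A minor point: the \TraRef{step} successor $[s_2+1,e_2+1)$ lies in $Q$ whenever $e_2<t+1$, so the assumption $a_i\le t$ is needed for the initial states to be states at all, not for enabledness of \TraRef{step}.
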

\begin{proof}
Given an agent in state $[s,e)\in Q$, we say that it \emph{occupies} levels $s,...,e-1$.

Let $C\in\N^I$ denote an initial configuration and let $k:=\intol(C)$. We show $\out_0(C)=\varphi(C)$ (i.e.\ the protocol has the correct output on $C$) and $\out_k(C)=\out_0(C)$ (i.e.\ the protocol is robust on $C$).

\parag{Case 1: $\sum_{q \in Q} \len(q)C(q)<t$} This implies $\varphi(C)=0$. Here, we claim that no agent reaches a state $[s,t+1)$ and \TraRef{accum} is never executed, which implies $\out_0(C)=0$. To see that this is true, note the invariant that the set of levels in $[0,t]$ occupied by at least one agent is contiguous and contains level $0$. Formally, for any $D$ $0$-reachable from $C$ we have $\bigcup_{D(q)>0}q=[0,i]$ for some $i$. We prove this invariant by induction. The base case follows from the definition of $I$, and for the induction step we use that \TraRef{accum} is never executed, and that in transition \TraRef{step} we have $s_2\in[s_1,e_1)$, so no hole is created, and $s_1\le s_2$, so $0$ remains occupied.

This proves $\out_0(C)=\varphi(C)$. To show $\out_k(C)=\out_0(C)$ we again use a monotonicity argument similar to the proof of \tref{thm:tower-robust}. We have already shown that $C$ cannot cover a state with $t + 1$ as the upper bound without snipes, and this extends to arbitrary $k$-executions.

\parag{Case 2: $\sum_{q\in Q}\len(q)C(q)\ge t$} Note that any transition leaves $\sum_{q \in Q}\len(q)C(q)$ invariant. Let $D$ denote any configuration $k$-reachable from $C$.

If $\sum_{q}\len(q)D(q)<t$, then this observation would imply that it is possible to snipe $k$ agents from $C$ (without executing any moves) and end up in a configuration with output $0$. But this contradicts $k=\intol(C)$.

Again similar to the proof of \tref{thm:tower-robust}, transitions \TraRef{step} and \TraRef{accum} can only be executed finitely often, so we assume \Wlog{} that $D$ is terminal. We now argue that $D$ has output $1$. If level $t$ is occupied, i.e.\ in $D$ there is some agent in a state $[s,t+1)$, then all other agents must be in such a state as well (otherwise \TraRef{accum} could be executed) and the protocol accepts. Otherwise we use $\sum_{q\in Q}\len(q)D(q)\ge t$: an agent in state $q$ occupies $\len(q)$ levels and only levels in $[0,t-1]$ are occupied, so at least one level is occupied by two agents. However, these agents could execute \TraRef{step}, contradicting that $D$ is terminal. 
\end{proof}

\subsection{Threshold predicates with threshold $1$}
\label{subsec:genmajority}

We design a robust protocol for predicates of the form 
\[\phi(x) = \bigg( \sum_{i=1}^n a_i \cdot x_i \geq 1 \bigg)\]
Since they are generalizations of the majority predicate $x - y \geq 1 \iff x - y > 0$, we call them \emph{generalized majority} predicates. In this section, we first introduce weak population protocols. Intuitively, in a weak consensus protocol not all states must be accepting or rejecting, but can can also be neutral. Then we present a very simple and yet robust weak population protocol for generalized majority. Finally, we present a conversion procedure that transforms a weak population protocol into a population protocol for the same predicate that also preserves robustness.

\subsubsection{Weak population protocols for generalized majority}
\label{subsubsec:gods}
Recall that population protocols reach a decision by \emph{consensus}. States can be split into accepting or rejecting. A configuration is a \emph{consensus} if either all ninjas are in accepting states or all ninjas are in rejecting states. The output function assigns a decision to each consensus configuration, and only to them.  We present \emph{weak population protocols} in which states can also be \emph{neutral}. Intuitively, ninjas in neutral states have no opinion. A configuration is a \emph{weak consensus} if no two ninjas have conflicting opinions. The output function assigns a decision to each consensus configuration, and only to them, as follows: accept if at least one ninja wants to accept and no ninjas wants to reject, and reject otherwise. Observe that, in particular, if no ninja wants to accept, then the decision is reject.

\begin{definition}
An output function $O \colon \mathbb{N}^Q \to \{\accept, \reject\}$ is a \emph{weak consensus} function if there is a partition of $Q$ into three sets $\Qpos$, $\Qneu$, $\Qneg$ of \emph{accepting}, \emph{neutral}, and \emph{rejecting} states such that for every configuration $C$:
\begin{itemize}
\item $O(C) = \accept$ if $\supp{C} \subseteq \Qpos \cup \Qneu$ and $\supp{C} \cap \Qpos \neq \emptyset$;
\item $O(C) = \reject$ if $\supp{C} \subseteq \Qneg \cup \Qneu$; and
\item $O(C) = \bot$ otherwise (that is, $O(C) = \bot$ if $\supp{C} \cap \Qpos \neq \emptyset \neq \supp{C} \cap \Qneg$).
\end{itemize}
A generalized protocol $\Pcal = (Q, \delta, I, O)$ is a \emph{weak population protocol} if $O$ is a weak consensus function, and every \(\accept\)-consensus is stable.
\end{definition}
Notice that the output function is determined by the partition $\Qpos$, $\Qneu$, $\Qneg$.
Let us now give weak population protocols for generalized majority predicates. We consider the predicate $x_1 - 2x_2 \geq 1$, which is already representative. It corresponds to the following situation. Sensei wants that the ninjas conduct a vote to decide whether they want to attack or not. The ninjas will only attack if more than 2/3 of them vote for it. Letting $x_1$ and $x_2$ be the number of ninjas in favor of and against attacking, respectively, the ninjas will attack if $x_1 / (x_1 + x_2) > 2/3$ or, equivalently, $x_1 - 2x_2 \geq 1$. Intuitively, a negative vote has the same effect as two positive votes.

Sensei asks the ninjas to bring a pouch containing one \emph{positive pebble} if they want to attack, and two \emph{negative pebbles} otherwise. If two ninjas with pebbles of the same kind interact, nothing happens. If the pebbles are of opposite kinds, one ninja gives her pebbles to the other. However, when a positive and a negative pebble touch each other, they magically disappear, and so after an interaction between two ninjas with $a$ positive and $b$ negative pebbles one ninja has $|a-b|$ pebbles and the other $0$. We visualize this protocol in \fref{fig:gen-maj}.

The possible states of the protocol are $\{1, 0, -1, -2\}$, which gives the number of pebbles in the pouch and their kind. The partition of states is the natural one: $\Qpos = \{1\}$, $\Qneu = \{0\}$, and $\Qneg = \{-1, -2\}$.  The only non-silent transitions are $1, -2 \mapsto 0, -1$, and $1, -1 \mapsto 0, 0$. Observe that, in particular, $1,1 \mapsto 2, 0$ or $-2, -2 \mapsto -4$ are \emph{not} transitions, because the pebbles of the interacting ninjas are of the same kind. 

Let us informally argue that the protocol is correct and robust. Consider a configuration $C$ in which $x_1$ ninjas want to attack, and $x_2$ ninjas do not.  Observe first that the total number of pebbles never increases, and it decreases whenever two ninjas carrying pebbles of opposite kinds interact. Therefore, starting from $C$ the protocol eventually reaches a terminal configuration $C'$ in which all remaining pebbles (if any) are positive or all are negative. In the first case only states of $\Qpos \cup \Qneu$ are populated, and in the second only states of $\Qneg \cup \Qneu$. So $C'$ is a weak consensus. Further, it is easy to see that the cases occur when  $x_1 - 2x_2 \geq 1$ and $x_1 - 2x_2 < 1$, respectively. To prove robustness, observe that the argument above holds for any configuration $C$, and so for the sniper all configurations are equally good.

Let us now give the precise definition of the protocol for the predicate $\sum_{i=1}^n a_i \cdot x_i \geq 1$.
Let $a_\mathit{min},a_\mathit{max}$ be the minimum and maximum of $\{a_1, \ldots, a_n\}$. We assume $a_\mathit{min} < 0 < a_\mathit{max}$, the case of  predicates in which all coefficients have the same sign was already considered in the previous section.  The protocol is:

\begin{protocol}{GenMajority}
\States{$Q = [a_\mathit{min}, a_\mathit{max}]$}
\Input{$I = \{a_1, \ldots, a_n\}$}
\Output{Given by $\Qpos = [1, a_\mathit{max}]$, $\Qneu = \{0\}$, and $\Qneg = [a_\mathit{min},-1]$.}
\Transitions{$\delta = \{x, y \mapsto x + y, 0 : x, y \in Q \land x < 0 < y\}$}
\end{protocol}

\begin{figure}
	\centering
	\begin{tikzpicture}[thick, transform shape, scale=0.75]
        \node[place,nicered,fill=nicered!20!white,label={0:{\Large $\mathbf{-2}$}}] (qm2) {};
		\node[transition] (t1) [below = of qm2] {};
        \node[place,niceredbright,fill=niceredbright!20!white,label={0:{\Large $\mathbf{-1}$}}] (qm1) [below = of t1] {};
        \node[place,nicebluebright,fill=nicebluebright!20!white,label={0:\textbf{{\Large 1}}}] (qp1) [right = of qm1] {};
        \node[place,gray,fill=gray!20!white,label={0:\textbf{{\Large 0}}}] (q00) [left = of qm1] {};
		\node[transition] (t3) [below = of qm1] {};

		\path[->]
		(qm2) edge (t1)
		;
		\path[->]
		(qp1) edge (t1)
		(qp1) edge (t3)
		;
		\path[->]
		(qm1) edge (t3)
		;
		\draw[->](t1) edge (qm1);
		\path[->]
		(t1) edge (q00)
		(t3) edge node[below left] {$2$} (q00)
		;
	\end{tikzpicture}
	\caption{Weak population protocol for $x_1 - 2x_2 \geq 1$. Red states are rejecting, the blue state is accepting and the gray state is neutral.}
	\label{fig:gen-maj}
\end{figure}
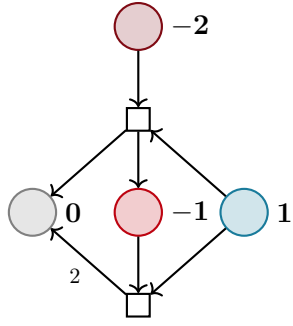

\subsubsection{From weak population to population protocols}
In order to use weak population protocols in practice, we provide an algorithm that given a weak population protocol deciding a predicate, under the assumption that the protocol is \emph{terminating}, i.e.\ every fair execution terminates, outputs a population protocol for the same predicate. Further, the algorithm preserves robustness.

The construction to obtain a population protocol \(\Pcal'\) from a weak population protocol \(\Pcal\) is rather simple: Every ninja remembers a state of \(\Pcal\), and in fact runs the protocol \(\Pcal\) as normal. In addition, neutral ninjas \(A\), i.e.\ in a state \(q \in \Qneu\), get an output bit \(\{+,-\}\), which they set as follows: Whenever they meet a ninja \(B\) with an active opinion, i.e.\ with state \(p \in \Qpos \cup \Qneg\), \(A\) updates her bit to the opinion of \(B\). Since by definition of weak consensus eventually only one opinion remains, every neutral ninja eventually gets the opinion of the remaining ninja. A slight difficulty, where we need the assumption of termination of \(\Pcal\), is the case of only neutral ninjas remaining: We add a transition which takes neutral ninjas with output bits set to \(+\) and \(-\) respectively and sets both bits to \(-\).

\begin{construction}
	\label{cstr:ranks}
	Given a weak population protocol $\Pcal = (Q, \delta, I, O)$ with partition $\Qpos$, $\Qneu,\Qneg$, we construct a population protocol $\Pcal' = (Q', \delta', I', O')$:

    \begin{protocol}{WeakConvert}
    \States{$Q' = \Qpos \cup \Qneg \cup \Qneu \times \{+, -\}$}
    \Input{$I' = \{q : q \in I \cap (\Qpos \cup \Qneg)\} \cup \{(q, -) : q \in I \cap \Qneu\}$}
    \Output{$O'(q') = \begin{cases}
    1 & \text{if } q' \in \Qpos \text{ or } q' = (q, +) \text{ with } q \in \Qneu \\
    0 & \text{if } q' \in \Qneg \text{ or } q' = (q, -) \text{ with } q \in \Qneu
    \end{cases}$}
    \Transitions{}
    \Done
    \begin{itemize}
        \item First we need a transition to simulate $\Pcal$, given agents in states $p,q$ in $\mathcal{P}'$, they 1) disregard their sign if they are in \(\Qneu\), 2) perform \(\delta\) on the so obtained projected states \(p_{pr}, q_{pr}\), and 3) readd a \emph{negative sign} if they end in \(\Qneu\). 
        
        Formally, let \(\pr \colon Q' \to Q, q' \mapsto q'\) if \(q' \in \Qpos \cup \Qneg\) and \(q' \mapsto q\) if \(q'=(q,s) \in \Qneu \times \{+,-\}\). Also let \(\inj \colon Q \to Q'\), \(q \mapsto q\) if \(q \in \Qpos \cup \Qneg\) and \(q \mapsto (q,-)\) otherwise. For every non-silent transition $(p_{pr}, q_{pr} \mapsto p_{pr}', q_{pr}') \in \delta_\Gcal$, we add for all states \(p,q\) with \(\pr(p)=p_{pr}, \pr(q)=q_{pr}\), the following transition:
	\[p,q \mapsto \inj(p_{pr}'), \inj(q_{pr}') \TraName{derived}\]
        \item Additionally, for $q_+ \in \Qpos$ and $q_= \in \Qneu$ we add the following transition:
            \[q_+, (q_=, -) \mapsto q_+, (q_=, +) \TraName{witnessPos}\]
        \item Dually, for $q_- \in \Qneg$ and $q_= \in \Qneu$ we add the following transition:
            \[q_-, (q_=, +) \mapsto q_-, (q_=, -) \TraName{witnessNeg}\]
        \item And for every $q, q' \in \Qneu$ we add:
	\[(q, -), (q', +) \mapsto (q, -), (q', -)\TraName{convince}\]
    \end{itemize}
    \end{protocol}

	Given an execution $\pi$ on $\Pcal'$, we define the induced execution $\tilde{\pi}$ on
	$\Pcal$ via $\tilde{\pi}(i) = \text{pr}(\pi(i))$ where $\text{pr}((q, s)) = q$.
	Moving from \(\pi\) to $\tilde{\pi}$ intuitively replaces \TraRef{derived} by their respective transition and
    \TraRef{witnessPos}/\TraRef{witnessNeg}/\TraRef{convince} by silent identity transitions.
	If $\pi$ is fair, $\tilde{\pi}$ is also fair.
\end{construction}

In the following, a positive voter is a ninja with a state in \(\Qpos\), a negative voter a ninja with a state in \(\Qneg\) and an active voter is either a positive or negative voter. Essentially there is only one type of configuration of \(\Pcal'\) where assigning the output as above will fail: If there are no active voters, and neither is a ninja in \(\Qneu \times \{-\}\) to convince the others. Hence we call a configuration \(C\) \emph{dangerous} if \(\supp{C} \subseteq \Qneu \times \{+\}\), and \emph{harmless} otherwise. 

\begin{lemma}
	\label{lma:god-output}
	Let \(C_0'\) be a harmless configuration of \(\Pcal'\) such that all fair executions starting at
	$\pr(C_0') \in \N^{Q}$ terminate with output $b$ in $\Pcal$. 
	
	Then all fair executions starting at $C_0'$ terminate with output $b$ in $\Pcal'$.
\end{lemma}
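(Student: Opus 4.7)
The plan is as follows. Take a fair execution $\pi$ of $\Pcal'$ starting at $C_0'$. By \csref{cstr:ranks}, the induced execution $\tilde{\pi}$ is a fair execution of $\Pcal$ from $\pr(C_0')$, so by hypothesis it reaches a terminal configuration $T$ of $\Pcal$ with $O(T) = b$. Fix $n_0$ with $\tilde{\pi}(i) = T$ for all $i \geq n_0$. My goal is to exhibit a terminal configuration $D$ of $\Pcal'$ such that (i) $\pr(D) = T$ and $O'(D) = b$, and (ii) $D$ is reachable in $\Pcal'$ from every $\pi(i)$ with $i \geq n_0$. Then fairness of $\pi$ will force $D$ to occur in $\pi$, and since $D$ is terminal, $\pi$ will stabilize at $D$ with output $b$.

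I would first observe that past $n_0$ only the sign-manipulating transitions \TraRef{witnessPos}, \TraRef{witnessNeg}, and \TraRef{convince} can fire in $\pi$: a \TraRef{derived} transition originates from a non-silent $\delta$-transition and would alter the $\pr$-projection, contradicting $\pr(\pi(i)) = \tilde{\pi}(i) = T$. I would then split into three cases according to $T$: (1) $T$ contains a state in $\Qpos$ and $b = \accept$; (2) $T$ contains a state in $\Qneg$ and $b = \reject$; (3) $\supp{T} \subseteq \Qneu$ and $b = \reject$. Define $D$ to have the same $\Pcal$-projection as $T$, with every neutral agent carrying the sign $+$ in case (1) and the sign $-$ in cases (2) and (3). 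In case (1), one reaches $D$ from $\pi(i)$ by applying \TraRef{witnessPos} between a fixed $\Qpos$ agent and each $\Qneu \times \{-\}$ agent in turn; case (2) is the dual argument via \TraRef{witnessNeg}.

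The main obstacle is case (3), where no active voter survives past $n_0$, so the only usable transition is \TraRef{convince}, which requires some agent in $\Qneu \times \{-\}$ to be present to do anything. To secure such a driver, I would prove the invariant that $\Pcal'$ cannot leave the harmless configurations: \TraRef{witnessPos} preserves the triggering $\Qpos$ agent, \TraRef{witnessNeg} and \TraRef{convince} both produce a $-$ sign, and \TraRef{derived} equips every newly created neutral agent with sign $-$ through $\inj$, so no single transition can turn a non-dangerous configuration into a dangerous one. Since $C_0'$ is harmless, so is $\pi(n_0)$, and the absence of active voters in case (3) then forces at least one $\Qneu \times \{-\}$ agent to be present; that agent drives the convergence to $D$ by repeated \TraRef{convince} applications with each $\Qneu \times \{+\}$ agent. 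Finally, a direct check across the four transition schemas shows that in every case $D$ is terminal in $\Pcal'$ (the sign-manipulating transitions each demand a state that $D$ lacks, and terminality of $T$ in $\Pcal$ rules out \TraRef{derived}), which together with the reachability of $D$ from every $\pi(i)$ closes the fairness argument and yields $\out(\pi) = O'(D) = b$.
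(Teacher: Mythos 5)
Your proof is correct and follows essentially the same route as the paper: pass to the induced fair execution of $\Pcal$, use the hypothesis to fix its terminal projection, distinguish whether an active voter survives, and in the all-neutral case use harmlessness to obtain an agent in $\Qneu \times \{-\}$ that drives \TraRef{convince}, closing with fairness. The only notable difference is presentational: you derive the surviving $-$ agent from the clean inductive invariant that no transition of $\Pcal'$ can produce a dangerous configuration (since $\inj$ only emits $-$ signs), whereas the paper argues via a case distinction on the last occurrence of a \TraRef{derived} transition; your explicit target configuration plus the fairness definition also makes the final ``everyone gets convinced'' step more formal than the paper's wording.
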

\begin{proof}
	Let $\pi=(\pi_0, \pi_1, \dots)$ be a fair execution in $\Pcal'$ starting at $C_0'$. We have to show that \(\pi\) has output \(b\). 
	First consider the induced execution $\tilde{\pi}$ in $\Pcal$. Since $\Pcal$ is terminating and \(\tilde{\pi}\) is fair, \(\tilde{\pi}\) terminates 
	in some configuration \(C_{pr}'\). Clearly using \TraRef{witnessPos}/\TraRef{witnessNeg} and potentially \TraRef{convince}, 
	\(\pi\) then also terminates in some configuration \(C'\), and we know that \(\pr(C')=C_{pr}'\).
	Because $\tilde{\pi}$ is also fair, the output of $C_{pr}'$ is \(b\). If an active voter exists in \(C_{pr}'\) and hence in \(C'\), it will eventually 
	convert everyone to the correct output via the \TraRef{witnessPos}/\TraRef{witnessNeg} transitions, and hence \(\pi\) again has output \(b\).

	If \(C'\) does not contain an active voter, then \(C_{pr}'\) has output \(0\) by definition, and we have to show the same for the execution \(\pi\).
	We first show that the configuration \(C'\) contains an agent in \(\Qneu \times \{-\}\). We prove this via case distinction:
	
	\parag{Case 1: \textnormal{No \TraRef{derived} transition ever occurred}} Then we have \(\pi_i(\Qpos)=0\) for all \(i\), and since \(C_0'\) is harmless, initially an agent was in \(\Qneu \times \{-\}\). The only transition which could remove the agent from \(\Qneu \times \{-\}\), namely \TraRef{witnessPos}, was never enabled.
	
	\parag{Case 2: \textnormal{\TraRef{derived} did occur}} Then consider the configuration \(\pi_i\) directly after the last occurrence. It contains an agent in \(\Qneu \times \{-\}\) by definition of \TraRef{derived}. We can now argue as in case 1.
	
	It follows that \TraRef{convince} moves all agents from \(\Qneu \times \{+\}\) to \(\Qneu \times \{-\}\), and \(\pi\) terminates with output \(0\).
\end{proof}
\begin{theorem}
	\label{thm:god-robust}
	Given a terminating weak population protocol $\Pcal$ deciding a predicate \(\varphi\), the above construction outputs a population protocol $\Pcal'$ deciding \(\varphi\). Additionally, if $\Pcal$ is robust, then $\Pcal'$ also is.
\end{theorem}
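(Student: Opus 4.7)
My plan is to prove correctness and robustness of $\Pcal'$ in turn, both as applications of \lref{lma:god-output}.

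\textbf{Correctness.} Every initial configuration $C_0' \in \N^{I'}$ is harmless, since by definition of $I'$ the initial $\Qneu$-agents all carry bit $-$. Its projection $\pr(C_0')$ is an initial configuration of $\Pcal$, from which every fair execution terminates with output $\varphi(\pr(C_0'))$ by hypothesis. \lref{lma:god-output} therefore yields that every fair execution of $\Pcal'$ from $C_0'$ terminates with the same output, so $\Pcal'$ decides $\varphi$.

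\textbf{Tolerance transfer.} For the robustness claim, fix an initial $C_0'$ and write $C_0 := \pr(C_0')$ and $k := \intol_{\Pcal'}(C_0')$. I would first argue $k = \intol_{\Pcal}(C_0)$: snipes commute with the projection $\pr$, and any nonempty configuration $D'$ obtained from $C_0'$ by snipes alone has no agent in $\Qneu \times \{+\}$ (since $C_0'$ has none there to begin with), so it is still harmless. By \lref{lma:god-output} its $\Pcal'$-output equals the $\Pcal$-output of $\pr(D')$, and since $\pr$ puts $k$-snipes of $C_0'$ in bijection with $k$-snipes of $C_0$ the two initial tolerances coincide.

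\textbf{Lifting the output.} Let $\pi$ be a fair $k$-execution of $\Pcal'$ from $C_0'$, and let $\tilde{\pi}$ be its induced execution in $\Pcal$. Then $\tilde{\pi}$ is a fair $k$-execution of $\Pcal$ from $C_0$, so by robustness of $\Pcal$ it outputs $b := \varphi(C_0)$. It remains to lift this to $\out(\pi) = b$. Since $\Pcal$ is terminating, $\tilde{\pi}$ stabilizes at some $\Pcal$-terminal $D$ with $O(D) = b$. I plan a case analysis on $D$ analogous to the proof of \lref{lma:god-output}. If $D$ contains an active voter matching $b$ (a $\Qpos$-agent when $b = 1$, a $\Qneg$-agent when $b = 0$), then this voter persists in $\pi$'s tail and fairness of \TraRef{witnessPos} (respectively \TraRef{witnessNeg}) forces every $\Qneu$-agent of $\pi$'s tail to adopt the matching bit, giving $\out(\pi) = b$. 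The only delicate case is $b = 0$ with $D \subseteq \Qneu$: here $\pi$'s tail lives in $\Qneu \times \{+, -\}$ and by fairness of \TraRef{convince} stabilizes either with all $\Qneu$-agents carrying bit $-$ (output $0 = b$, as desired) or in the \emph{dangerous} all-bit-$+$ consensus with output $1$.

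\textbf{Main obstacle.} The bulk of the work lies in ruling out the dangerous outcome. I would mimic Case~2 of \lref{lma:god-output}'s proof: either no \TraRef{derived} ever fired in $\pi$, so the $(q,-)$-agents present in $C_0'$ survive unless the sniper snipes them --- but the corresponding $\Pcal$-snipes would then reduce $C_0$ to a configuration violating $\out_0(\cdot) = b$, contradicting $k \leq \intol_{\Pcal}(C_0)$; or the last \TraRef{derived} necessarily produces two agents in $\Qneu \times \{-\}$, because any active voter produced by it would persist and contradict $D \subseteq \Qneu$. Those fresh witnesses can only be removed by \TraRef{witnessPos} (impossible after the last \TraRef{derived} since no $\Qpos$-agent remains) or by further snipes. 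The crux of the proof is then a careful bookkeeping argument pairing snipes of such $(q,-)$-witnesses in $\pi$ against snipes in $\tilde{\pi}$ and using the tight identity $k = \intol_{\Pcal}(C_0)$ to show that the sniper cannot simultaneously push $\tilde{\pi}$ into an all-$\Qneu$ terminal and exhaust every $(q,-)$-witness in $\pi$. I expect the formalization of this budget argument, rather than any individual case, to be the technical heart of the proof.
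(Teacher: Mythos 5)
Your correctness argument, the transfer of the initial tolerance, the passage to the induced execution $\tilde{\pi}$, and the witness-spreading cases all match the paper's route (the paper likewise commutes all snipes to the end, writes the run as $C_0'\tostar C'\sn^i C''$ with $C''$ terminal, and applies \lref{lma:god-output} to $C''$). However, the decisive step --- excluding that $\pi$ stabilizes in a \emph{dangerous} configuration, i.e.\ all surviving agents in $\Qneu\times\{+\}$ --- is precisely the part you leave open as ``a careful bookkeeping argument'', so the proof is not complete. Moreover, the direction you sketch does not look repairable as stated: the contradiction ``the corresponding $\Pcal$-snipes reduce $C_0$ to a configuration violating $\out_0(\cdot)=b$, contradicting $k\le\intol_{\Pcal}(C_0)$'' cannot arise, because $k\le\intol_{\Pcal}(C_0)$ by definition guarantees that sniping up to $k$ initial agents never changes $\out_0$; and a budget argument pairing snipes against $(q,-)$-witnesses cannot work in general, since $k$ may vastly exceed the number of such witnesses (a single surviving $(q,-)$-agent versus a sniper with many snipes left), so the sniper can always afford to remove them all.

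The paper closes this case with a different, non-counting argument that you are missing two ingredients for. Since the only transition producing a $+$ bit is \TraRef{witnessPos}, a positive voter existed at some point; let $C_+$ be the \emph{last} configuration of the run containing one. After $C_+$ every enabled transition leaves its participants in $\Qneg$ or with bit $-$, so no agent of $C''$ participates in any transition after $C_+$; hence the agents of $C''$ are already frozen in $\Qneu\times\{+\}$ at $C_+$, and the positive voter $q_+$ of $C_+$ must be one of the agents sniped between $C'$ and $C''$. Sniping from $C_+$ exactly the other later-sniped agents reaches $C''+\cfg{q_+}$ with at most $i-1$ snipes; its projection is an accepting weak consensus, which is \emph{stable} by the definition of weak population protocols, so $\Pcal$ has a fair execution with at most $k$ snipes from $\pr(C_0')$ that accepts, while the execution through $\pr(C'')$ rejects --- contradicting the robustness (global tolerance) of $\Pcal$, not its initial tolerance. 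The ``freezing after $C_+$'' observation, which certifies that a sniped positive voter exists and can be restored, together with the stability assumption on accepting consensuses, is what lets the dangerous case be ruled out without any snipe bookkeeping; without them your argument has a genuine gap.
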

\begin{proof}
	Applying \lref{lma:god-output} to all initial configurations, we conclude that $\Pcal'$
	decides the same predicate \(\varphi\) as $\Pcal$. Observe that initial configurations are harmless since agents start in \(\Qneu \times \{-\}\) by definition.

    It remains to prove that \(\Pcal'\) is robust. Hence let $C_0' \in \N^{I'}$ be an initial configuration. We have to show that \(\tol(C_0') = \intol(C_0')\). Hence let \(\pi\) be a fair $i$-execution starting at \(C_0'\) with \(i = \intol(C_0')\). We have to show that \(\pi\) has output \(\varphi(C_0')\). 
    
    Observe that since \(\Pcal\) is terminating, \(\Pcal'\) is also terminating. Hence \(\pi'\) eventually becomes constant, we call the reached terminal configuration \(C''\). Even if an agent was sniped earlier in \(\pi\), we can \Wlog{} assume it was sniped directly before \(C''\), and simply did not participate in transitions before that. I.e. there exists $C' \in \N^{Q'}$ such that $C_0' \tostar C'$ and $C' \sn^i C''$. To show that \(\pi\) has output \(\varphi(C_0')\), it suffices to show that \(C''\) has output \(\varphi(C_0')\). We want to apply \lref{lma:god-output} on the configuration \(C''\), hence we have to check its assumptions.
    
    First, since \(\Pcal\) and \(\Pcal'\) compute the same predicate, we have \(\intol_{\Pcal'}(C_0')=\intol_{\Pcal}(\pr(C_0'))\). Since \(\Pcal\) is robust, and the projected execution \(\tilde{\pi}\) contains at most \(i = \intol_{\Pcal}(\pr(C_0'))\) snipes, \(\tilde{\pi}\) has output \(\varphi(C_0')\). Therefore \(\pr(C'')\) has output \(\varphi(C_0')\). Since \(\pr(C'')\) is terminal, this shows the assumption about all fair executions of the projection having correct output. It remains to prove that \(C''\) is harmless.

    Assume for contradiction that \(C''\) is dangerous. In particular \(\pr(C'')\) is rejected. Similar to the case reasoning in \lref{lma:god-output}, observe that the only transition outputting agents in \(\Qneu \times \{+\}\) is \TraRef{witnessPos}. Hence to reach a dangerous configuration, in some configuration between $C_0$ and $C'$ some agent must have been in \(\Qpos\) and got sniped. We denote the latest (i.e.\ closest to $C'$) configuration that has a positive voter by $C_+$.

    Note that no agent in $C''$ participated in a transition since $C_+$, since after $C_+$ no positive voters exist and hence all enabled transitions set the output bit to \(-\). We now consider the configuration $C_+'' = C'' + \cfg{q_+}$ where $q_+$ is a positive voter in $C_+$. We can reach $C_+''$ from $C_+$ via sniping exactly those agents that are sniped from $C'$ to $C''$ except for the positive voter. Since \(C''\) does not contain negative voters, \(C_+''\) does not, and hence \(C_+''\) is an accepting consensus. By assumption on weak population protocols, any accepting consensus is stable, hence \(C_+''\) is accepted. Hence \(\pr(C_0')\) has an accepting execution via \(C_+''\), and a rejecting execution via \(C''\), despite both executions sniping \(\leq \intol(\pr(C_0'))\) many agents. Contradiction to robustness of \(\Pcal\).
\end{proof}

\begin{example}
We now apply this construction to the protocol that we gave above for $x_1 - 2 x_2 \geq 1$ and visualized in \fref{fig:gen-maj}.
We write $+0$ and $-0$ instead of $(0,+)$ and $(0,-)$ for clarity.
\begin{protocol}{SignedNumbers}
\States{$Q = \{1, -1, -2, +0, -0\}$}
\Input{$I = \{-2, 1\}$}
\Output{$O(q) = \begin{cases}
    1 & \text{if } q = 1 \text{ or } q = +0 \\
    0 & \text{if } q = -1 \text{ or } q = -2 \text{ or } q = -0
\end{cases}$}
\Transitions{}
\Done
\[\begin{array}[t]{rrcrrr}
        -2    ,& 1   & \; \mapsto \; & -1   ,&+0 \qquad & \text{coming from \TraRef{derived}}\\
        -1    ,& 1      & \mapsto & -0,& -0 \qquad & \text{coming from \TraRef{derived}}\\
        -0,& 1      & \mapsto & +0,& 1      \qquad & \text{coming from \TraRef{witnessPos}}\\
        +0,& -2     & \mapsto & -0,& -2     \qquad & \text{coming from \TraRef{witnessNeg}}\\
        +0,& -1     & \mapsto & -0,& -1     \qquad & \text{coming from \TraRef{witnessNeg}}\\
       +0,& -0 & \mapsto & -0,&-0 \qquad & \text{coming from \TraRef{convince}}
    \end{array}\]
\end{protocol}
    
\end{example}

\subsection{Arbitrary threshold predicates}
\label{subsec:combination}

In this subsection we again consider the threshold predicate from the beginning of this section:
\[\phi(x) = \bigg( \sum_{i=1}^n a_i \cdot x_i \ge t \bigg)\]
Now we also allow for $a_i, t \in \mathbb{Z}$. We again will create a tower-esque protocol, however
this time it will be a weak population protocol. Ninjas representing inputs with negative coefficients will have to cancel positive ninjas. We assume \(t \geq 1\) \Wlog, since we can rewrite \(\sum_{i=1}^n a_i \cdot x_i \ge t\) as \(\sum_{i=1}^n (-a_i) \cdot x_i \le -t\), which is the negation of \(\sum_{i=1}^n (-a_i) \cdot x_i \ge -t+1\). Negation preserves robustness, see \rref{remark:robust-facts}.

\newcommand{\Qt}{Q_{\textsc{tower}}}
\newcommand{\Qc}{Q_{\textsc{cancel}}}

\begin{protocol}{InhomTowerCancel}
States are intervals of integers as well as non-positive integers. We identify $0$ with all empty intervals. Initial states are intervals starting at zero as well as negative integers.\smallskip

\States*{
    $Q = \Qt \cup \Qc \cup \{0\}$ where
    \item $\Qt = \{[s, e) : 0 \leq s < e \leq T\}$
    \item $\Qc = \{x : \min_i a_i \leq x < 0\}$
    \item $T = \max \{t, a_1, \ldots, a_n\}$
}
\Input{$I = \{[0, a_i) : a_i > 0\} \cup \{a_i : a_i < 0\}$}
\Output*{
    Given by the following partition:
    \item $\Qpos = \{[s, e) \in \Qt : t \leq e\}$,
    \item $\Qneu = \{[s, e) \in \Qt : e < t\} \cup \{0\}$,
    \item $\Qneg = \Qc$
}
\Transitions{}
\Done
\begin{itemize}
    \item For every two non-disjoint intervals $[s_1, e_1), [s_2, e_2) \in \Qt$ with $e_1, e_2 < T$ and $s_1 \leq s_2$ the transition
    \[[s_1, e_1), [s_2, e_2) \mapsto [s_1, e_1), [s_2 + 1, e_2+1) \TraName[cancel]{step}\]
    \item When a negative agent meets an agent above level $t$, they both reduce their absolute value by one. For all $0 \leq s < e \leq T$ with $t \leq e$ and $x \in \Qc$ the transition
    \[[s, e), x \mapsto [s, e - 1), x + 1 \TraName{cancel}\]
    note that we identify empty intervals i.e.\ $[s, s)$ with the state $0$.
\end{itemize}
\end{protocol}

\newcommand{\Inv}{\operatorname{\mathit{Sum}}}

\begin{theorem}
	\label{thm:inhom-tower-cancel-correct-and-robust}
	\InhomTowerCancel{} computes $\phi$ and is robust.
\end{theorem}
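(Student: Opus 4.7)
The plan is to mirror the four-step strategy of \tref{thm:inhom-tower-correct-and-robust}: identify a conserved quantity, prove termination, analyse the terminal configurations by cases on $\phi(C_0)$, and lift the correctness to the sniped setting using monotonicity. The conserved quantity is $\Inv(C) := \sum_{[s,e)\in\Qt}(e-s)\cdot C([s,e)) + \sum_{x\in\Qc} x\cdot C(x) = L(C)-M(C)$, where $L(C)$ is the total positive length and $M(C):= -\sum_{x\in\Qc} x\cdot C(x)$ is the total negative mass; \TraRef[cancel]{step} preserves every individual $(e-s)$ and every $x$, while \TraRef{cancel} decreases one $(e-s)$ by $1$ and simultaneously increases one $x$ by $1$, so $\Inv(C)=\sum_{i=1}^n a_i x_i$ for every $C$ reachable from $C_0\in\N^I$. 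Termination follows from two bounded potentials: $L(C)+M(C)$ strictly decreases under \TraRef{cancel} and is preserved by \TraRef[cancel]{step}, while $\sum_{[s,e)\in\Qt} s\cdot C([s,e))$ strictly increases under \TraRef[cancel]{step} and is preserved by \TraRef{cancel}.

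Let $C$ be any terminal configuration reachable from $C_0$. If $\Inv(C_0)\ge t$, then $C$ cannot contain both a $\Qneg$ agent and an interval with $e\ge t$, else \TraRef{cancel} would be enabled; and if no interval has $e\ge t$, then every interval has $e<t\le T$, so step-disabled forces pairwise disjointness in $[0,t-1]$ and $L(C)\le t-1$, giving $\Inv(C)=L(C)-M(C)<t$, a contradiction. Hence some $\Qpos$ agent exists and no $\Qneg$ agent does, making $C$ an accepting weak consensus. If $\Inv(C_0)<t$ and $C$ still contains a $\Qneg$ agent, the same cancel-disabled argument forces no $\Qpos$ agent, and $C$ is a rejecting weak consensus. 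The remaining case, $\Inv(C_0)<t$ with no $\Qneg$ in $C$, is the crux: $\Inv(C)=L(C)<t$ and I must rule out any interval with $e\ge t$. For this I would establish the structural invariant $\max\{e: [s,e)\in\supp{C}\cap\Qt\}\le L(C)$ for every reachable $C$ by induction on the execution length. The invariant holds at $C_0$, is preserved by \TraRef[cancel]{step} because the helper $[s_1,e_1)$ with $s_1\le s_2<e_1$ still covers the level $s_2$ vacated by the moved interval, and is preserved by \TraRef{cancel} applied to the topmost interval since both the top and $L$ drop by one together. The main obstacle is \TraRef{cancel} applied to a non-top interval, where showing that no interior level loses its last cover requires an auxiliary argument using the fact that, for $t\ge 2$, every reachable configuration retains an interval rooted at $s=0$.

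Finally, for robustness with $k:=\intol(C_0)$ and a fair $k$-execution $\pi$, I would argue that any snipe can be ``pushed'' to the beginning by \rref{remark:robust-facts} and by monotonicity of the transition relation: removing an agent never enables new transitions, so the execution can be replayed from a smaller initial configuration $C_0'\in\N^I$ with $|C_0|-|C_0'|\le k$, reaching a configuration that covers the terminal of $\pi$. Since $k\le\intol(C_0)$ we have $\phi(C_0')=\phi(C_0)$, and applying the $0$-snipe correctness analysis above to $C_0'$ yields $\out(\pi)=\phi(C_0)$.
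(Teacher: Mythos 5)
Your overall skeleton (the conserved sum $\Inv$, termination, analysis of terminal configurations) matches the paper's, and your terminal-configuration analysis is correct for the case $\Inv(C_0)\ge t$ and for the sub-case of $\Inv(C_0)<t$ in which a $\Qneg$ agent survives. But the crucial sub-case $\Inv(C_0)<t$ with no $\Qneg$ agents left is not closed: you propose the invariant $\max\{e:[s,e)\in\supp{C}\cap\Qt\}\le L(C)$ and yourself flag that the \TraRef{cancel} case is unresolved. Two problems compound here. The invariant as stated is too weak to be inductive: a \TraRef[cancel]{step} raises some $e_2$ by one, and nothing in the hypothesis $\max e\le L$ prevents $e_2$ from already being maximal and equal to $L$; your justification (``the helper still covers the vacated level'') is really a hole-freeness argument, i.e.\ you implicitly need the stronger statement that the occupied levels form $[0,i]$ --- which is exactly the invariant the paper uses, asserting in addition that \TraRef{cancel} creates no hole. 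So the step you defer is the load-bearing one, and the auxiliary fact you invoke (a $0$-rooted interval persists when $t\ge 2$) does not by itself settle it: with a tall $0$-rooted interval, a second interval that has climbed strictly above it, and remaining $\Qc$ agents, a \TraRef{cancel} applied to the $0$-rooted interval removes the only cover of its top level while higher levels stay occupied, so you need a genuinely further argument about what terminal configurations can look like, not just the invariant as proposed.

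The robustness part is also flawed as written. ``Pushing snipes to the beginning'' is licensed by \rref{remark:robust-facts} only for agents that never interacted; a sniped agent may already have served as helper for \TraRef[cancel]{step} or been partially cancelled, and then the execution cannot be replayed from the smaller initial configuration $C_0'$ (removing the agent disables moves the original prefix used --- ``never enables new transitions'' is beside the point), nor would reaching a configuration that merely covers the terminal configuration of $\pi$ determine $\out(\pi)$, since the extra agents could change the consensus. Note also that $\Inv$ is not preserved by snipes, so your case split on $\Inv(C_0)$ does not transfer to the sniped setting without extra work. The paper handles snipes quite differently: for $\Inv(C_0)\ge t$ it argues via initial tolerance that no $k$-reachable configuration can have $\Inv<t$ (an agent's current contribution is dominated by its initial one), and for $\Inv(C_0)<t$ it normalizes a hypothetical accepting $k$-execution by a minimal-counterexample argument showing one may assume only $\Qc$ agents are sniped: adding the sniped $\Qt$ agents back yields a configuration with no $\Qc$ agents, which can never again become rejecting, hence a smaller counterexample. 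Your proposal contains no substitute for either of these snipe-specific arguments.
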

\begin{proof}
    This proof is similar to that of \tref{thm:inhom-tower-correct-and-robust}.
    Given an agent in state $[s, e) \in \Qt$, we say that it \emph{occupies} levels $s, ..., e - 1$.

    Let $C_0 \in \N^I$ be an initial configuration and let $k = \intol(C_0)$. Additionally, let $\Inv(C) := \sum_{q\in \Qt} \len(q) C(q) + \sum_{q\in \Qc} q C(q)$ for any configuration $C$. Observe that any transition leaves $\Inv$ invariant. Since $\Inv(C) \leq t$ is $\phi$, initial tolerance ensures that sniping does not change $\Inv(C) \leq t$.

    Additionally, we note that the protocol produces an output for every fair $n$-execution for arbitrary $n$. Since eventually no agent reaches a level above or equal to $t$, or no negative states contain agents.
    We now show $\out_0(C_0) = \phi(C_0)$ and $\out_k(C_0) = \out_0(C_0)$.

    \parag{Case 1: $\Inv(C_0) < t$} This implies $\phi(C_0) = 0$. For contradiction, we consider a fair $i$-execution starting at $C_0$ that is accepting with minimal $i \leq k$. Since the protocol is bounded, we can equivalently consider configurations $C_0 \tostar C \snstar D \tostar C_{\mathit{term}}$ where $C_{\mathit{term}}$ is terminal.

    We now show that we can assume that the snipes from $C$ to $D$ do not snipe agents that are in $\Qt$. Let $\Delta$ denote the agents in $\Qt$ that were sniped in $C \snstar D$. Now consider $C' = D + \Delta$: we have $C \snstar C' \tostar C_{\mathit{term}} + \Delta$. The last configuration is again accepting, since it does not contain any agents in $\Qc$. While it is not terminal, it cannot become rejecting, since agents in $\Qc$ can never again exist. If $|\Delta| > 0$, then we have a smaller counterexample.

    In the same manner as in the proof of \tref{thm:inhom-tower-correct-and-robust}, we show that the set of occupied levels in a configuration $D$ $k$-reachable from $C$ without sniping in $\Qt$ is contiguous and contains $0$. We show by induction that $\bigcup_{D(q) > 0} q = [0, i]$ for some $i$. The base case follows from the definition of $I$, in the step case we must consider the two transitions: \TraRef[cancel]{step} does not introduce holes, since $s_2 \in [s_1, e_1)$ and $s_1 \leq s_2$, so $0$ remains occupied. \TraRef{cancel} either reduces the upper bound of an interval at or above $t$ by one, this also does not create a hole and leaves $0$ occupied.

    Now we consider the output of $C_{\mathit{term}}$. If there are agents in $\Qc$, then the levels $[t, T)$ cannot be occupied, since otherwise \TraRef{cancel} could be executed. If no agents are in $\Qc$, then we also know that $[t, T)$ cannot be occupied, since $\Inv(C_{\mathit{term}}) = \sum_{q\in \Qt}\len(q)C_{\mathit{term}}(q) < t$ and the occupied levels are contiguous and include 0.

    \parag{Case 2: $\Inv(C_0) \geq t$} This implies $\phi(C_0) = 1$. Let $D$ be a configuration $k$-reachable from $C_0$.

    If $\Inv(D) < t$, then by the observation above it would be possible to snipe $k$ agents in $C_0$ and obtain an initial configuration with $\Inv < t$. By the previous case, it would reject, thus contradicting that $k = \intol(C_0)$.

    We again, similar to the proofs \tref{thm:tower-robust} and \tref{thm:inhom-tower-correct-and-robust} use the fact that \TraRef[cancel]{step} and \TraRef{cancel} can only be executed finitely often. So \Wlog{} $D$ is terminal. We show that $D$ has output $1$: if level $t$ or any above is occupied, then since $D$ is terminal, there cannot exist any agent in $D$ that is in a state from $\Qc$. Therefore $D$ has output $1$. In the case that no agent occupies level $t$ or above, we know from $\Inv(D) \geq t$ that $\sum_{q\in \Qt}\len(q)D(q) \geq t$ and thus at least one level is occupied by two agents. This means that \TraRef[cancel]{step} could be executed contradicting that $D$ is terminal.
\end{proof}
\begin{corollary}
	Giving \InhomTowerCancel{} as input to \csref{cstr:ranks} we obtain a robust population protocol computing $\phi$.
\end{corollary}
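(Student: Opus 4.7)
The plan is to verify the hypotheses of \tref{thm:god-robust} for \InhomTowerCancel{} and then invoke that theorem. Since \tref{thm:inhom-tower-cancel-correct-and-robust} already shows that \InhomTowerCancel{} decides $\phi$ and is robust, the only remaining tasks are to check that \InhomTowerCancel{} is a weak population protocol in the formal sense (i.e., every \(\accept\)-consensus is stable) and that it is terminating.

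For stability, I would observe that any \(\accept\)-consensus $C$ has $\supp{C} \subseteq \Qt$ with at least one agent in $\Qpos$. The \TraRef{cancel} transition requires a $\Qc$-agent and is therefore disabled, so the only remaining non-silent moves are \TraRef[cancel]{step} transitions, which keep both participating agents inside $\Qt$ by construction. Moreover, a $\Qpos$-agent, whose interval $[s,e)$ satisfies $e \geq t$, either stays put (as the first agent of a step move) or has its upper endpoint strictly incremented (as the second), so it remains in $\Qpos$. Hence every configuration reachable from $C$ is again an \(\accept\)-consensus.

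For termination, I would reuse the observation already made inside the proof of \tref{thm:inhom-tower-cancel-correct-and-robust}, namely that neither \TraRef[cancel]{step} nor \TraRef{cancel} can fire infinitely often: a short lexicographic monotonicity argument, combining a measure on $\Qc$-multiplicities (which strictly drops at every \TraRef{cancel}) with a measure on interval lower endpoints (which strictly grows at every \TraRef[cancel]{step}), suffices, and both components are bounded within the finite state space. With both preconditions discharged, the corollary follows immediately from \tref{thm:god-robust}. I do not expect any real obstacle; both verifications are short monotonicity arguments parallel to those already used in the paper, and the bulk of the work has been absorbed into the two theorems being invoked.
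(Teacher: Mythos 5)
Your proposal is correct and follows essentially the same route as the paper: both verify the hypotheses of \tref{thm:god-robust} — termination of the weak protocol, stability of every accepting consensus (the cancel transition is disabled once no negative-state agent exists, and the step transition can only turn further agents into positive voters), and correctness plus robustness from \tref{thm:inhom-tower-cancel-correct-and-robust} — and then apply that theorem. The only minor imprecision is that an accepting consensus may also contain agents in the neutral state $0$, not only interval states, but such agents are inert in every non-silent transition, so your stability argument is unaffected.
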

\begin{proof}
	Clearly \InhomTowerCancel{} is terminating, and \tref{thm:inhom-tower-cancel-correct-and-robust} shows it is robust. Every accepting consensus is stable: Namely no agent is in \(\Qc\) since these are negative voters, wherefore \TraRef{cancel} is permanently disabled. On the other hand \TraRef[cancel]{step} can only cause additional agents to become positive voters. Hence we can apply \tref{thm:god-robust}.
\end{proof}

\section{Robust Modulo Protocols}
\label{sec:modulo}

Throughout this section, fix a modulo predicate not equivalent to true or false:
\[\phi(x) = \bigg(\sum_{i=1}^n a_i \cdot x_i \bmod m \geq t\bigg)\]
With $a_i, t \in \mathbb{Z}$ and $m \in \N$. Without loss of generality, we can assume that $0 < a_i, t < m$, because the sum is calculated modulo $m$.

When the Black Ninjas attack fortresses of the Dark Powers on nights with a full moon, they only do so if the number of ninjas modulo 7 is equal to \(4,5\) or \(6\). Therefore, Sensei is searching for a robust population protocol for $x \bmod 7 \geq 4$. Sensei finds an important difference between threshold and modulo predicates: Threshold predicates allow for arbitrarily large initial tolerance, for modulo predicates however there exists an upper bound on the initial tolerance.
\begin{proposition}
	\label{prop:modulo-pred-safe-empty}
    For any population protocol computing $\phi$, we have for any initial configuration $C \in \N^I$ that $\intol(C) < m$.
\end{proposition}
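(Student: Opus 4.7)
I would approach this by noticing that the output of any protocol computing $\phi$ on an initial configuration coincides with $\phi$ itself, and that a snipe applied to an initial configuration yields again an initial configuration (the support can only shrink); hence $\out_0(D)=\phi(D)$ for every $D\le C$. The claim $\intol(C)<m$ therefore reduces to the purely combinatorial statement that there exists $D\le C$ with $1\le |C|-|D|\le m$ and $\phi(D)\ne \phi(C)$, and the rest of the argument proceeds at the predicate level alone.

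The predicate depends only on the residue $g(C):=\sum_{i=1}^n a_i\cdot C(q_i)\bmod m\in [0,m)$, with the threshold $t$ cutting $[0,m)$ into the two $\phi$-regions. Each snipe of a type-$q_i$ agent decreases $g$ by $a_i$ modulo $m$, a nonzero amount since $a_i\in(0,m)$. My plan is to exhibit a snipe sequence of length at most $m$ whose cumulative shift in $\Z/m\Z$ moves $g(C)$ into the opposite $\phi$-region. First I would focus on sniping agents of a single populated initial state $q_i$: after $k$ snipes with $0\le k\le\min(C(q_i),m)$, the residue is $g(C)-k a_i\bmod m$, so the visited residues sweep through the coset $g(C)+\langle a_i\rangle\subseteq\Z/m\Z$. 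Combining snipes of different populated types enlarges the reachable residues to lie within $g(C)+\langle a_i:C(q_i)>0\rangle$ while keeping the total snipe count at most $m$. The key step is then to invoke non-triviality of $\phi$ to conclude that this reachable set contains at least one residue on the opposite side of $t$ from $g(C)$.

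The main obstacle I anticipate is the tight combinatorial bookkeeping under the constraint that the total number of snipes stays bounded by $m$. One must show that even for configurations populated only by types whose coefficients generate a proper subgroup of $\Z/m\Z$, some flipping shift is realisable within $m$ snipes; this will likely require an argument using both the subgroup structure and the placement of the threshold $t$ relative to the boundary residues $t-1$ and $t$. Small configurations with $|C|<m$ can be handled separately via the trivial bound $\intol(C)\le |C|$.
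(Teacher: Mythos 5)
Your opening reduction is sound and is exactly the paper's route: since the protocol $0$-decides $\phi$ and sniping an initial configuration yields another initial configuration, $\out_0(D)=\phi(D)$ for all $D\le C$, so $\intol(C)<m$ amounts to exhibiting some $D\le C$ with $1\le |C|-|D|\le m$ and $\phi(D)\ne\phi(C)$. The paper's proof is just a one-line version of this idea (``with at most $m-1$ snipes the sum changes by at least $m-1$, which allows changing the output''), so you are on the same path, only more explicit about what remains to be shown.

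The problem is the step you defer to the end: ``invoke non-triviality of $\phi$ to conclude that the reachable residue set contains a residue on the opposite side of $t$.'' This is not tight bookkeeping to be sorted out later --- it is false as stated. The residues reachable by sniping lie in (a portion of) the coset of the subgroup of $\Z/m\Z$ generated by the coefficients of the \emph{populated} input states, and global non-triviality of $\phi$ says nothing about that subgroup. Concretely, take $m=4$, $t=3$, $a_1=2$, $a_2=3$, so $\phi(x_1,x_2)=(2x_1+3x_2\bmod 4\ge 3)$ is non-trivial; for a configuration $C$ containing only $x_1$-agents, every $D\le C$ has residue in $\{0,2\}$, hence $\phi(D)=0=\phi(C)$ for all $D\le C$, and the flipping configuration your reduction needs simply does not exist (so, reading the definition literally, $\intol(C)\ge m$ once $|C|\ge m$). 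In other words, the statement needs an extra hypothesis (e.g.\ that the coefficients of the populated states generate residues crossing the threshold, or that $\phi$ is non-constant on $\{D: D\le C\}$) before any snipe-counting can succeed; the paper's own one-liner makes exactly the same unjustified jump, since decreasing the sum by at least $m-1$ in steps that all lie in a proper subgroup (here the even residues) need not move the residue across $t$. A secondary soft spot: your fallback ``$\intol(C)\le|C|$ for $|C|<m$'' is not automatic either, since sniping all agents leaves the empty configuration, whose consensus output is not meaningfully different (indeed not well-defined), so smallness of $|C|$ by itself does not bound $\intol(C)$.
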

\begin{proof}
    Using at most $m - 1$ snipes, the value of the sum can be changed by at least $m - 1$, thus allowing to change the output.
\end{proof}
Using this fact, we will first construct a protocol for sufficiently large inputs in \sref{subsec:big-modulo}.
Then we will combine that protocol in \sref{subsec:small-modulo} with an \InhomTower and a protocol for small inputs to obtain a robust population protocol for all inputs.

\subsection{A Modulo Protocol for Big Inputs}
\label{subsec:big-modulo}

After proving \pref{prop:modulo-pred-safe-empty}, Sensei comes up with a robust population protocol that can handle sufficiently large initial configurations.
The sniper snipes at most $m-1$ ninjas, allowing Sensei to take advantage of a principle common with error-correcting codes: to achieve a fault tolerance of $r$, $2r + 1$ copies of a fault intolerant code are used in parallel. Then one can decide the correct value via majority. Sensei applies this principle to protocols: she thinks that it will be sufficient to have $2(m - 1) + 1 = 2m - 1$ copies of any protocol for a modulo predicate running in parallel. For combining with small inputs, which will become clear later, we need an additional copy, so we will use $2m$ copies in total.

To compute $x \bmod 7 \geq 4$, Sensei gives the following instructions to the ninjas:
Each ninja must bring 14 differently colored pouches, each containing a magic pebble. Whenever 7 magic pebbles are in the same pouch, they annihilate each other. Thus the pebbles are added modulo 7.
In the beginning, each ninja must choose a color. After doing so, they become a \emph{leader} for that color.
When a leader and any other ninja that is not a leader for the same color interact, the leader steals all pebbles from the other's pouch with the chosen color.
When two leaders for the same color interact, one of them retires and gives all the pebbles of that color to the other.
The retired leader then again chooses a color where she still has some pebbles left and can again become a different leader. If she has no pebbles left, she remains a non-leader.
Additionally, every ninja remembers for every color the number of pebbles in the pouch of the last seen leader of that color.
The ninjas decide to attack via majority: if a majority of leaders agree that there at least 4 pebbles in their pouch, then they attack.

Sensei argues that the protocol works for sufficiently large configurations (where $x \geq 14$): every ninja can at most be a leader for a given color once. At least one leader for every color will remain and since there are more than 14 ninjas, every color will have a leader. Every leader correctly computes the value $x \bmod 7$ via the magic pebbles and without a sniper, they all will agree. In the case of a sniper, for every snipe at most one color can be disturbed. Since a sniper can at most snipe 6 times, the majority of 14 leaders will remain correct.

We now give the formal definition of the $\BigModulo$ population protocol, where we write \(\one{A}\) for the indicator function of the set \(A\):

\begin{protocol}{\BigModulo}
\States{$Q = [0, 2m] \times \mathbb{Z}_m^{2m} \times \{0, 1\}^{2m}$}
\Input{$I = \{(0, a_i \cdot \one{[1, 2m]}, \zero): 1 \leq i \leq n\}$}
\Output{$O(\_, \_, r) = 1 \iff |\{j: r_j = 1\}|>m$}
\Transitions{}
\Done
\begin{itemize}
    \item We need to distribute the agents in the different copies. For this, we use non-determinism, for all $(0, v, r), q \in Q$ and $1 \leq i \leq 2m$ with $1 \leq v(i)$ we add the following transition:
\[(0, v, r), q \mapsto (i, v, r), q \TraName{distrib}\]
\item When two agents meet that are in different copies, both steal the tokens from their respective copy from the other. For all $(i, v, r), (j, w, s) \in Q$ with $1 \leq i,j \leq 2m$ and $i\neq j$ we add the following transition:
\[(i, v, r), (j, w, s) \mapsto (i, v', r), (j, w', s) \TraName{steal}\]
Where
\[v' = v - v_j \cdot \one{\{j\}} + w_i \cdot \one{\{i\}}\qquad
	w' = w - w_i \cdot \one{\{i\}} + v_j \cdot \one{\{j\}}\]
\item When two agents from the same copy meet, one of them takes all tokens from the other and the other retires. For all $(i, v, r), (i, w, s) \in Q$ with $1 \leq i \leq 2m$ we add the following transition:
\[(i, v, r), (i, w, s) \mapsto (i, v + w_i \cdot \one{\{i\}}, r), (0, w - w_i \cdot \one{\{i\}}, s)\TraName{retire}\]
\item Lastly we need a transition that updates the last component to keep track of the overall output of
the protocol. For all $(i, v, r), (j, w, s) \in Q$ with $1 \leq i \leq 2m$ and
$0 \leq j \leq 2m$ we add the following transition:
\[(i, v, r), (j, w, s) \mapsto (i, v, r'), (j, w, s') \TraName{result}\\\]
Where \[r'(l) = \begin{cases}
		v_i \geq t & \text{if } l = i \\
		r(l)       & \text{else}
	\end{cases} \qquad s'(l) = \begin{cases}
		v_i \geq t & \text{if } l = i \\
		s(l)       & \text{else}
	\end{cases}\]
\end{itemize}
\end{protocol}

We call agents with a non-zero first component \emph{leaders}, and call the \(j\)-th copy of \(\Z^m\) the \(j\)-th subprotocol.
\begin{proposition}
    \label{prop:big-modulo-correct}
	For every initial configuration $C_0 \in \N^I$ with $|C_0| \geq 2m$, $\BigModulo$ decides $\phi$.
\end{proposition}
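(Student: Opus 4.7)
The plan is to establish an arithmetic invariant carried by each of the $2m$ subprotocols, show that the mass-moving dynamics eventually stabilize into a canonical shape, and then read off the correct output from the remaining \TraRef{result} transitions.

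For the invariant, define, for each $j \in [1, 2m]$ and each configuration $C$, the quantity $S_j(C)$ as the sum over all agents of $C$ of their $v(j)$-coordinate, taken modulo $m$. A direct case split on the four transitions shows that $S_j$ is preserved by every move: \TraRef{distrib} and \TraRef{result} leave $v$ unchanged, whereas \TraRef{steal} and \TraRef{retire} only transfer mass between the two interacting agents within the affected coordinates. Hence $S_j(C) = S_j(C_0) = \sum_i a_i x_i \bmod m =: S$ for every $j \in [1, 2m]$ and every reachable $C$, so every subprotocol carries the same target value $S$ and the correct output bit $\phi(C_0) = (S \geq t)$ does not depend on $j$.

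Next I would prove a finiteness observation that drives stabilization. The structural fact to exploit is that an agent that is and remains a non-leader does not change its $v$-vector, because only \TraRef{distrib} involves a non-leader as first agent and \TraRef{distrib} preserves $v$. Since \TraRef{retire} zeroes the retiring agent's own subprotocol coordinate, once an agent has $v(j) = 0$ while not being leader for $j$ it can never become leader for $j$ again. In particular each agent becomes a leader for each subprotocol at most once, bounding the number of \TraRef{distrib} and \TraRef{retire} events along any execution by $2m \cdot |C_0|$ each. A separate monovariant on $\sum_j M_j(C)$, where $M_j$ counts agents with $v(j) > 0$, shows that only finitely many \TraRef{steal} moves actually alter the configuration. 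Combined with fairness, every fair execution eventually reaches a shape where, for each $j \in [1, 2m]$, there is exactly one leader $L_j$ for $j$ with $v_{L_j}(j) = S$ and $v_a(j) = 0$ for every other agent $a$. The hypothesis $|C_0| \geq 2m$ enters here to guarantee that every subprotocol acquires a leader: initially every agent has $v(j) = a_i \geq 1$ for all $j$, so all $2m$ \TraRef{distrib} transitions are enabled, and whenever several leaders pile up on one subprotocol a \TraRef{retire} releases an agent that still has positive mass in other coordinates, allowing a new \TraRef{distrib} to any leaderless subprotocol.

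From the stabilized shape the conclusion follows: the \TraRef{result} transitions remain enabled forever, and each $L_j$ overwrites the $j$-th bit of $r$ to $(v_{L_j}(j) \geq t) = (S \geq t) = \phi(C_0)$ on every agent it meets; by fairness every agent is written to by every $L_j$ infinitely often. Hence $|\{j : r_j = 1\}|$ stabilizes to $2m$ when $\phi(C_0) = 1$ and to $0$ when $\phi(C_0) = 0$, and in both cases the consensus output equals $\phi(C_0)$. The delicate part of the argument will be the stabilization step: one must control the interplay between \TraRef{distrib}, \TraRef{steal} and \TraRef{retire} carefully enough to ensure that the execution actually lands on the clean ``one leader per subprotocol holding $S$'' configuration, and this is where both the freezing of non-leaders' $v$-vectors and the redundancy $|C_0| \geq 2m$ are essential.
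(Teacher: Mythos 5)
Your proposal follows essentially the same route as the paper's proof: the per-copy modular invariant $S_j$, the fact that an agent can become leader of each copy at most once, the use of $|C_0|\ge 2m$ to ensure every copy eventually has exactly one leader who accumulates all of that copy's tokens, and the broadcast of the correct bit via the \textsf{result} transitions. Two details of your stabilization step are however not right as stated, though both are repairable. First, the monovariant $\sum_j M_j$ need not strictly decrease on a configuration-altering \textsf{steal}: the values live in $\mathbb{Z}_m$, so the leader of copy $i$ may currently have $v(i)=0$ (earlier transfers can cancel modulo $m$), and when it takes a nonzero $i$-amount from the other leader, the donor leaves the count for $M_i$ while the receiver enters it, leaving $\sum_j M_j$ unchanged. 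A potential that does work is the number of pairs $(a,j)$ such that agent $a$ has $v_a(j)\neq 0$ but is not currently a leader of copy $j$: this never increases under any transition and drops by at least one with every non-silent \textsf{steal} (and also with every \textsf{distrib}). Second, your reason why a leaderless copy $j$ eventually acquires a leader---that a \textsf{retire} ``releases an agent that still has positive mass in other coordinates''---is not guaranteed, since the retired agent's remaining coordinates may all be zero. The correct observation is that while copy $j$ has no leader, no transition modifies any agent's $j$-coordinate, so every agent still carries its initial value $a_i\neq 0$ there; hence any non-leader can claim copy $j$ by \textsf{distrib}, and if all of the at least $2m$ agents are leaders of the at most $2m-1$ other copies, pigeonhole plus fairness forces a \textsf{retire} that produces such a non-leader. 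With these two repairs your argument coincides with the one given in the paper.
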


\begin{proof}
Let \(C_0\) be an initial configuration. We first claim that every subprotocol will eventually have exactly one associated leader.  

Proof of claim: When two leaders meet via \TraRef{retire}, only one of them remains a leader and the other gives away their tokens in that subprotocol. Since \TraRef{distrib} only makes agents leaders that have tokens in the respective subprotocol, that agent will never again be a leader for that subprotocol. Since we have at least $2m$ agents, every subprotocol will eventually end up with a leader.

Now it is easy to see that each leader will eventually hold all the tokens in their respective subprotocol. Namely they accumulate them via \TraRef{steal}.

Since the \(j\)-th sum $\Inv_j(C) := \sum_{q=(i,v,r)\in Q} v(j) \cdot C(q)$ is invariant modulo \(m\), the leader for every subprotocol has exactly $\Inv_j(C_0) \bmod m$ many tokens, hence obtaining the correct output. This output is then broadcasted via the \TraRef{result} transition to every non-leader.
\end{proof}

\begin{proposition}
For every initial configuration \(C_0 \in \N^I\) with \(|C_0| \geq 3m\), $\BigModulo$ is robust.
 \label{prop:big-modulo-robust}
\end{proposition}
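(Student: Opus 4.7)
The plan is to exploit the majority-voting structure of \BigModulo: since $k := \intol(C_0) \leq m - 1$ by \pref{prop:modulo-pred-safe-empty}, showing that any fair $k$-execution $\pi$ corrupts at most $k$ of the $2m$ parallel subprotocols would leave at least $2m - k \geq m + 1$ correctly voting subprotocols, so the majority vote would yield $\phi(C_0)$.

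First, I would confirm that every subprotocol still produces a definite vote in the terminal configuration $C_{\mathrm{term}}$ of $\pi$. Since $|C_{\mathrm{term}}| \geq |C_0| - k \geq 3m - (m-1) = 2m + 1$, the leader-accumulation argument of \pref{prop:big-modulo-correct} still goes through: each subprotocol $j$ terminates with a unique leader $\ell_j$ whose token count equals $\Inv_j(C_{\mathrm{term}}) \bmod m$. I would then introduce the hypothetical initial configuration $C^*$ obtained from $C_0$ by removing one initial agent of input type $i_s$ for each snipe $s$; since $|C_0 - C^*| \leq k$, we have $\phi(C^*) = \phi(C_0)$ by definition of initial tolerance, and $\Inv_j(C^*) = \Inv_j(C_0) - \sum_s a_{i_s}$ is independent of $j$. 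Intuitively, if every snipe were moved to time zero, every subprotocol would compute $\phi(C^*) = \phi(C_0)$.

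The per-subprotocol discrepancy $\Delta_j := \Inv_j(C_{\mathrm{term}}) - \Inv_j(C^*) = \sum_s (a_{i_s} - v^s(j)) \bmod m$ vanishes for every snipe of an agent still in its initial state. For a non-initial snipe, the state of a sniped leader of some subprotocol $k_s$ has a very restricted form: $v^s(j) = a_{i_s}$ for subprotocols $j$ whose leader the snipe-victim has not yet interacted with, $v^s(j) = 0$ for those it has already interacted with via \TraRef{steal}, and $v^s(k_s)$ equals some accumulated value $V_s$. Consequently each non-initial snipe contributes $a_{i_s}$ to $\Delta_j$ at its \emph{met foreign} subprotocols, $a_{i_s} - V_s$ at $k_s$, and zero elsewhere.

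The hard part will be bounding the number of $j$ for which the combined discrepancy $\Delta_j$ flips the condition $\Inv_j(C^*) \geq t \bmod m$. I plan to use a charging argument: every $+a_{i_s}$ contribution at a met foreign subprotocol $j$ is accompanied by a matching $+a_{i_s}$ of tokens transferred to the $j$-leader during the \TraRef{steal} preceding the snipe, and these tokens remain in $\Inv_j$ unless the $j$-leader is itself sniped. Thus chains of compensating discrepancies have length bounded by the number of snipes, and one can construct an injection from flipped subprotocols into the set of snipes. With $k < m$ this yields at most $k$ corruptions, hence at least $m + 1$ correct votes and the correct majority output.
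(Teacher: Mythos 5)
Your overall route is the same as the paper's: use the $2m$ parallel copies, note $k=\intol(C_0)\le m-1$, and argue that each snipe can corrupt at most one copy because a sniped agent's vector has the restricted form you describe (entries equal to $a_{i_s}$ in copies it has not yet interacted in, $0$ in copies it has already settled, and an accumulated value only in the copy it currently leads). This restricted form is precisely the invariant $J(a,n)\subseteq L(a,n)$ that the paper establishes by induction over the four transitions and over snipes, so up to that point you are on the paper's track, modulo actually carrying out that induction and the argument that fair $k$-executions reach a terminal configuration. (One small inaccuracy: a copy whose last leader is sniped may end with no leader at all, so ``each subprotocol terminates with a unique leader'' is not true in general; this is harmless, since such copies are written off as damaged anyway.)

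The genuine gap is in your last step. By measuring every copy against the single baseline $C^*$ (all snipes moved to time zero) you make $\Delta_j$ the quantity to control, but for an \emph{undamaged} copy $j$ one has $\Delta_j\equiv\sum_{s\in S_j}a_{i_s}\pmod m$, where $S_j$ is the set of snipes whose victims had already surrendered their copy-$j$ tokens; this is typically nonzero, and your charging sketch (``the transferred tokens remain in $\Inv_j$'') is exactly the reason it is nonzero --- it gives no reason why adding it to $\Inv_j(C^*)$ cannot cross the threshold $t$ modulo $m$, and no injection into the set of snipes falls out of it. The missing idea, and the way the paper closes the proof, is to drop the single baseline and apply the initial tolerance \emph{per copy}: for an undamaged copy $j$, every sniped agent contributes to $\Inv_j$ either its full initial value $a_{i_s}$ (so it can be treated as removed from $C_0$ before the execution starts) or nothing (when $v^s(j)=0$), never a partial amount; hence $\Inv_j(C_{\mathrm{term}})\equiv\Inv_j\bigl(C_0^{(j)}\bigr)\pmod m$ for some initial configuration $C_0^{(j)}$ obtained from $C_0$ by deleting at most $k\le\intol(C_0)$ agents. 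Since $|C_0^{(j)}|\ge 2m$, \pref{prop:big-modulo-correct} together with the definition of initial tolerance gives that copy $j$ votes $\phi(C_0)$, even though different copies correspond to different effective sub-inputs. With this in place the bound you want is immediate: a wrong-voting copy must have had a current leader sniped, each victim leads at most one copy, so at most $k\le m-1$ of the $2m$ copies are wrong and the majority output is $\phi(C_0)$.
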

\begin{proof}
Let \(C_0\) be such an initial configuration. First observe that even after sniping we then have at least \(2m\) agents left. Intuitively the argument will be along the following lines: Case 1: If an agent has not interacted in a copy \(j\) yet, then sniping it falls into the initial tolerance. Case 2: If an agent has value \(0\) in a copy, it can also be sniped without issue. Case 3: Neither case 1 nor 2 is the case. We have to show that for every agent, case 3 happens for at most one copy. Then the number of copies damaged is at most the number of sniped agents, i.e.\ at most \(m-1\), and the majority of copies remains correct.

Hence let \(\rho=(C_0, C_1, \dots)\) be a fair $k$-execution with \(k = \intol(C_0)\). For every agent \(a\) we define 
\begin{align*}
    J(a,n) &:= \{j \in [1,2m] :\\
    &\qquad a \text{ has interacted in the $j$-th copy of $\Z^m$ before step } n\\
    &\qquad \land v_j(a)\neq 0\}\\
    L(a, n) &:= \begin{cases}
        \{j\} & \text{if $a$ is a leader in the $j$-th copy of $\Z^m$ at step $n$} \\
        \emptyset & \text{else}
    \end{cases}
\end{align*}
    We claim that if \(C_n\) fulfills that for all agents \(J(a,n) \subseteq L(a,n)\), then also \(C_{n+1}\) does. Comparing with the outline, this states exactly that case 3 occurs only for the copy \(j\) in which agent \(a\) is the leader. To see that this property is actually inductive, we inspect every transition: \TraRef{steal} and \TraRef{retire} set the value of agent \(a\) in a copy \(j\) to \(0\) on the non-leader, \TraRef{result} does not influence these values. \TraRef{distrib} can only be taken if $v_j(a) \neq 0$ and we thus have \(J(a,n+1) = \emptyset\). Importantly, the condition is in fact even preserved by snipes, since properties of the form ``For all agents \dots'' become easier by removing agents. Additionally note that the condition holds in any initial configuration.

Next we prove that cases 1 and 2 are non-problematic. We first claim that \(\rho\) ends in a terminal configuration.

Proof of claim: \Wlog{} we reorder \(\rho\) s.t.\ \(\rho\) is of the form $C_0 \tostar C \snstar D \tostar C_{\mathit{term}}$, i.e.\ such that all snipes happen at once and it reaches a terminal configuration. To ensure this, we commute the snipes past other transitions towards the last snipe. This is possible since a non-interacting agent might as well not exist. Transitions \TraRef{steal} and \TraRef{retire} reduce the total number of non-zero entries in the copies of \(\Z_m\) and this number cannot be increased. Hence they become permanently disabled eventually. Then all non-leaders have all entries equal to \(0\), and \TraRef{distrib}  is hence also disabled. Now \TraRef{result} provides every agent with some output for every subprotocol, and afterwards becomes disabled.

Hence \(\rho\) is of the form $C_0 \tostar C \snstar D \tostar C_{\mathit{term}}$. Let \(j\) be one of the copies such that no leader for this copy was sniped (i.e.\ such that no case 3 occurred for this copy). We prove that the projection \(\pr_j\) of \(\rho\) to the \(j\)-th subprotocol has the correct output. For all agents which were sniped before ever interacting, we assume they were sniped at the start, and hence consider the initial configuration \(C_0'\) with those agents removed. \(C_0'\) still has at least \(2m\) agents. Consider the execution \(\rho'\) starting at \(C_0'\) and otherwise mirroring \(\pr_j(\rho)\), which is possible since the removed agents never interacted in the \(j\)-th subprotocol.  We prove that the \(j\)-th subprotocol has the correct output in \(\rho'\), and hence also in \(\rho\).

By choice of \(j\), the only agents which are sniped in \(\rho'\) fulfilled \(v(j)=0\), and were non-leaders for this copy. In particular after the first \TraRef{distrib}, every configuration in \(\rho'\) always had at least one leader, in particular the terminal configuration \(\pr(C_{\mathit{term}})\). Since the configuration is terminal, we have exactly one leader. The \(j\)-th sum $\Inv_j(C) := \sum_{q=(i,v,r)\in Q} v(j) \cdot C(q)$ is invariant modulo \(m\), and is not impacted by sniping of agents with \(v(j)=0\).  Hence we in particular have \(\Inv_j(\pr(C_{\mathit{term}})) \equiv \Inv_j(C_0') \mod m\). In \(\pr(C_{\mathit{term}})\) this value is fully stored by the leader of the \(j\)-th subprotocol, since otherwise \TraRef{steal} would be enabled. Hence the leader has the correct output, and therefore using transition \TraRef{result} every agent eventually has the correct output in the \(j\)-th subprotocol.

We have therefore seen that only copies \(j\) where a leader gets sniped are damaged. Since every agent is leader in at most one subprotocol, and at most \(\intol(C_0)\) many agents are sniped, it follows that at most \(k \leq m-1\) subprotocols have the wrong output. The majority \(m+1\) of subprotocols hence remain correct. Therefore the whole protocol has the correct output.
\end{proof}


\subsection{Handling Small Inputs}
\label{subsec:small-modulo}

Sensei is not fully satisfied with her modulo protocol. At the moment the ninjas have a lot of members (thanks to the many new rookies), but that can change at any moment.
Therefore it is important for the modulo protocol to also handle small inputs. She thinks of the following idea: a tower computes the predicate $\psi':=\sum_{i=1}^n x_i \geq 2m$ to detect if there
are enough ninjas to successfully use the $\BigModulo$ protocol. If there are not enough ninjas, she would want to determine the exact modulo value by the highest ninja in the tower, which is not correct with this choice of \(\psi'\), since the coefficients were ignored. Hence instead she uses \(\psi:=\sum_{i=1}^n a_i x_i \geq 3m^2\), which is a sufficient condition that there are at least \(3m\) ninjas, though not a necessary condition.

Since the ninjas do not know how many of them will show up, in particular they do not know whether their number will be large or small. We solve this problem by suitably combining the protocols for the two cases. Intuitively, the ninjas execute both protocols \emph{simultaneously}, that is, at every moment in time they are in a state for each protocol.  Further, each ninja maintains an estimate of their remainder modulo $m$. Formally, we let $\Pcal_T = (Q_T, \delta_T, \_, O_T) = \InhomTower$ for the predicate $\psi$ and $\Pcal_M = (Q_M, \delta_M, \_, O_M) = \BigModulo$. The \ModuloCombined{} population protocol is defined as follows:

\begin{protocol}{\ModuloCombined}
\States{$Q = Q_T \times [0, 3m^2] \times Q_M$}
\Input{$I = \{([0, a_i), a_i, (0, a_i \cdot \one{[1, 2m]}, 0)): 1\leq i \leq n\}$}
\Output{$O((q_T, h, q_M)) = \begin{cases}
    h \bmod m \geq t & \text{if } h < 3m^2 \\
    O_M(q_M) & \text{else}
\end{cases}$}
\Transitions{}
\Done
\begin{itemize}
\item For all $(q_1, q_2 \mapsto q_1', q_2') \in \delta_T$ and $(p_1, p_2 \mapsto p_1', p_2') \in \delta_M$ and $h_1, h_2 \in [0, 3m^2]$ we add the following transition:
\[(q_1, h_1, p_1), (q_2, h_2, p_2) \mapsto (q_1', h, p_1'), (q_2', h, p_2')\TraName{parallel}\]
Where $h = \max\{h_1, h_2, \textsc{end}(q_1'), \textsc{end}(q_2')\}$ with $\textsc{end}([s, e)) = e$.
\end{itemize}
\end{protocol}

\begin{theorem}
	\ModuloCombined{} computes $\phi$ and is robust.
\end{theorem}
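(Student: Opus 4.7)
The plan is to prove that \ModuloCombined{} computes $\phi$ and is robust through a case analysis on $S := \sum_i a_i x_i$ of the initial configuration, combining the correctness and robustness of the two sub-protocols that run in lockstep. Let $C_0 \in \N^I$ and $k := \intol(C_0)$; by \pref{prop:modulo-pred-safe-empty} one has $k < m$. Fix an arbitrary fair $k$-execution $\pi$. Both sub-protocols terminate (the tower by the potential argument from \tref{thm:inhom-tower-correct-and-robust}, and \BigModulo{} by strict decrease of non-zero entries across the $2m$ copies of $\Z_m$), so $\pi$ reaches a terminal configuration $C^*$. Because $h$ is reset to the $\max$ of both participants on every interaction, in $C^*$ any two surviving agents with distinct $h$-values would participate in a non-silent interaction; hence all surviving agents share a common value $h^*$.

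\textbf{Case $S \geq 3m^2$.} Since each $a_i < m$, one obtains $|C_0| \geq S/(m-1) > 3m$, so \pref{prop:big-modulo-robust} applies to the \BigModulo{}-projection of $\pi$. Even after $k < m$ snipes the total tower sum remains $\geq 3m^2 - (m-1)^2 > 2m^2$, so the pigeonhole/termination argument of \tref{thm:inhom-tower-correct-and-robust} forces some agent into a top state $[s, 3m^2{+}1)$; then \TraRef{accum} propagates the top to every surviving agent, driving $h^* \geq 3m^2$. Every surviving agent therefore outputs $O_M(q_M)$, and by robustness of \BigModulo{} this common value equals $\phi(C_0)$.

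\textbf{Case $S < 3m^2$.} The \InhomTower{} length invariant rules out any interval ending at $3m^2{+}1$, so $h^* < 3m^2$ and every surviving agent outputs $h^* \bmod m \geq t$. I claim $h^*$ admits a representation $S - \sum_{d \in D'} a_d$ for some sub-multiset $D' \subseteq D$ of sniped agents: at the moment $h^*$ is first realised as the end of an interval, the max-end-$\leq$-sum invariant (established inside the proof of \tref{thm:inhom-tower-correct-and-robust}) combined with pairwise disjointness of surviving intervals in the terminal $C^*$ pins down the witness $D'$. Because $|D'| \leq |D| \leq k$, the initial-tolerance hypothesis gives $\phi(C_0 - D') = \phi(C_0)$, which yields $h^* \bmod m \geq t \iff \phi(C_0)$.

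The technical crux is the witness-subset claim in Case $S < 3m^2$: the sniper can interleave snipes and interactions so that $h^*$ coincides with neither $S$ nor the final surviving sum, and the corresponding $D'$ must be extracted from the tower's snipe-disrupted dynamics. The invariants of \tref{thm:inhom-tower-correct-and-robust} carry most of the scaffolding, but because snipes can break the interval-contiguity invariant of \InhomTower, extra bookkeeping — tracking which intervals contribute to $h^*$ and which do not — is required.
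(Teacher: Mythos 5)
There are two genuine gaps. First, your case split is on the initial sum $S$, but which branch of the output function applies is governed by the terminal value $h^*$, and these need not agree. In your case $S \geq 3m^2$ you argue that after at most $k<m$ snipes the surviving sum is still $\geq 3m^2-(m-1)^2 > 2m^2$ and conclude that the tower reaches the top; this is a non sequitur, since the tower's threshold is $3m^2$, not $2m^2$. The initial tolerance is tolerance with respect to the \emph{modulo} predicate $\phi$, not with respect to $\psi$, so when $S$ is only slightly above $3m^2$ the sniper may legitimately push the surviving sum below $3m^2$; then no agent ever reaches the top, $h^* < 3m^2$, and the output is the $h^*$-based one --- a situation covered by neither of your cases (your second case assumes $S<3m^2$ to bound $h^*$). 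The paper's proof avoids exactly this by splitting on the terminal $h$ instead: if $h=3m^2$ it deduces $|C_0|\geq 3m$ from $h \leq \sum_i a_i x_i$ and invokes \pref{prop:big-modulo-robust}; if $h<3m^2$ it analyses the $h$-based output directly.

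Second, the ``witness-subset'' claim that you yourself identify as the crux is not merely unproven, it is false: the terminal height need not be of the form $S-\sum_{d\in D'}a_d$ for any sub-multiset $D'$ of the sniped agents. Already with two agents of length $2$, one step produces ends $2$ and $3$ and both participants record $h=3$; sniping the upper agent leaves a survivor carrying $h=3$, while the representable sums are only $4$ and $2$. So robustness in the small case cannot be reduced to initial tolerance via an exact representation of $h^*$ as a post-snipe sum. The paper argues differently: it sandwiches $h$ in the interval between $\sum_q \len(q_T)C_0'(q)$ (the surviving agents' total length, a lower bound because the surviving intervals are pairwise disjoint at termination) and $\sum_q \len(q_T)C_0(q)$ (a ``height at most total mass'' upper bound), and then appeals to initial tolerance to conclude that $v \bmod m \geq t$ is constant over that range. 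Your proposed bookkeeping would have to produce an argument of that interval type rather than the exact-representation claim, so as written the proof does not go through.
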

\begin{proof}
    Let \(C_0\) be an initial configuration, and \(\rho=(C_0,C_1, \dots)\) a fair $k$-execution with \(k = \intol(C_0)\). Since every fair execution of both \BigModulo{} and \InhomTower{} terminates, every fair execution of \ModuloCombined{} terminates. Let \(C_r\) be the terminal configuration reached. Since \TraRef{parallel} in particular updates the values of \(h_1\) and \(h_2\) to the maximum, the fact that \(C_r\) is terminal implies that every agent has the same value \(h\) in this component. Since we chose the threshold of $\psi$ to be $3m^2$, which is a multiple of $m$, we have that $\intol_{\Pcal_T}(C_0) \geq \intol_{\Pcal_M}(C_0)$. By this argument the following case distinction does not depend on snipes:
	
	\parag{Case 1: \(h<3m^2\)} Then no agent in the tower component is at the top, and the output of \ModuloCombined{} is determined by \(O_T\). 
	Let \(C_0'\) be the initial configuration if the agents sniped in \(\rho\) had been sniped at the start.
	 The height \(h\) has to be in the interval
	\(\big[\sum_{q \in Q} \len(q_T) \cdot C_0'(q), \min(\sum_{q \in Q} \len(q_T) \cdot C_0(q), 3m^2)\big]\). By definition of initial tolerance, in fact \(\varphi\)
	is constant even on the larger interval \(\big[\sum_{q \in Q} \len(q_T) \cdot C_0'(q), \sum_{q \in Q} \len(q_T) \cdot C_0(q)\big]\).
	Hence no matter which of these values \(h\) actually has, using it we obtain the correct value of \(\varphi\).

	\parag{Case 2: \(h=3m^2\)} Then the output is fully given by $O_M(q_M)$. Since \(h=3m^2\), and every coefficient \(a_i\) is at most \(m\), initially we must have had at least \(3m\) agents. Hence \(C_0\) is a large enough initial configuration that we can use
	\pref{prop:big-modulo-robust} which states $\BigModulo$ is robust for \(C_0\), 
	wherefore the projection of \(\rho\) to \(Q_M\) and therefore also \(\rho\) itself gives the correct output.
\end{proof}

\section{Future Work: Boolean Combinations}
\label{sec:boolean-comb}

Recall that every Presburger predicate can be represented as a boolean combination of threshold and modulo predicates. 
Given protocols for two predicates $\varphi_1, \varphi_2$, there is a simple construction that yields protocols deciding $\varphi_1 \wedge \varphi_2$ or $\varphi_1 \vee \varphi_2$. Intuitively, agents execute both protocols simultaneously. In particular, their states are pairs of states of the protocols for $\varphi_1$ and $\varphi_2$, and their output function is the conjunction of disjunction of the output functions for $\varphi_1$ and $\varphi_2$ \cite{AngluinADFP06}. If this construction would preserve robustness, that is, if the protocol for $\varphi_1 \wedge \varphi_2$ or $\varphi_1 \vee \varphi_2$ would be robust whenever the protocols for $\varphi_1$ and $\varphi_2$ are, then we would have proved that every predicate has a robust protocol.

Unfortunately, the construction does not preserve robustness. While applying the construction for conjunction to our \InhomTower{} protocol from \sref{subsec:inhomogeneous} yields a robust protocol, this is not the case for the \InhomTowerCancel{} protocol from \sref{subsec:combination}.

Consider an initial configuration where the first protocol accepts and the second protocol rejects. Additionally, via sniping within the initial tolerance, it is possible to make the first protocol reject and the second accept. Now the sniper can wait until the first protocol accepts and then snipe such that the second protocol also accepts.
\begin{example}
    We consider the two predicates $\phi_1(x, y) = (2x - y \geq 3)$ and $\phi_2(x, y) = (y - x \geq 1)$ and their conjunction $\phi(x, y) = (\phi_1(x, y) \land \phi_2(x, y))$.
    The reachable states of our \InhomTowerCancel{} protocol instantiated with the two predicates are:
    \begin{align*}
        Q_1 = \{[0, 2), [1, 3), -1, 0\}\qquad Q_2 = \{[0, 1), -1, 0\}
    \end{align*}
    Where the initial states corresponding to $x$ are $[0, 2)$ for $\phi_1$ and $-1$ for $\phi_2$. For $y$ we have $-1$ and $[0, 1)$ respectively.

In order to show that the construction is not robust, we search for an initial configuration that is rejected by $\phi_1$ and accepted by $\phi_2$, and which can be sniped such that $\phi_1$ accepts and $\phi_2$ rejects. From such a configuration we can have an execution that first stabilizes in the second protocol, then a snipe follows, and then the first protocol also stabilizes, leading to an overall accepting  configuration that should have been rejected. To find such a configuration, we draw the plot in \fref{fig:accept-plot}. Colors indicate the values of both predicates (see the legend of the figure). Sniping corresponds to moving left or down in the plot, and so we search for a blue square from which we can reach a red square by moving down or left, and from which we cannot reach a black square with the same number of moves. This last condition corresponds to the tolerance condition: the sniper is only allowed to snipe when the output of the initial configuration does not change.
    \begin{figure}[t]
        \centering
    \begin{tikzpicture}[scale=0.5]
    	\tikzstyle{phione} = [preaction={fill, nicebgbluebright},pattern color=nicebluebright,pattern={Lines[angle=45,distance=5pt,line width=0.9pt]}];
    	\tikzstyle{phitwo} = [preaction={fill, nicebgredbright },pattern color=niceredbright,pattern={Lines[angle=-45,distance=5pt,line width=0.9pt]}];
    	\tikzstyle{phiall} = [preaction={fill, black!15},pattern color=black,pattern={Hatch[angle=-45,distance=5pt,line width=0.9pt],}];
        \foreach \x in {1,...,7} {
            \draw (\x, -0.1) -- (\x, 0.1) node[below=0.2cm] {$\x$};
            \draw (-0.1, \x) -- (0.1, \x) node[left=0.2cm] {$\x$};
            \foreach \y in {1,...,7} {
                \pgfmathtruncatemacro{\phiA}{2*\x - \y}
                \pgfmathtruncatemacro{\phiB}{\y - \x}
                \ifnum\phiA<3
                    \ifnum\phiB<1
                    \else
                        \fill[phione] 
                        (\x-0.5,\y-0.5) rectangle (\x+0.5,\y+0.5);
                    \fi
                \else
                    \ifnum\phiB<1
                        \fill[phitwo] (\x-0.50,\y-0.502) rectangle (\x+0.50,\y+0.50);
                    \else
                        \fill[phiall] (\x-0.50,\y-0.50) rectangle (\x+0.50,\y+0.50);
                    \fi
                \fi
            }
        }
        \draw[step=1,black,xshift=5mm,yshift=5mm,line width=0.4pt] (0,0) grid (7,7);
        
        \draw[->] (0,0) -- (7.5,0) node[right] {$x$};
        \draw[->] (0,0) -- (0,7.5) node[above] {$y$};
    \end{tikzpicture}   
        \caption{Plot of the predicates $\phi_1(x, y) = (2x - y \geq 3)$ and $\phi_2(x, y) = (y - x \geq 1)$ on initial configurations. The color indicates the value of the predicates: it is red if $\phi_1$ accepts, blue if $\phi_2$ accepts, black if both accept and white if both reject.}
        \label{fig:accept-plot}
    \end{figure}
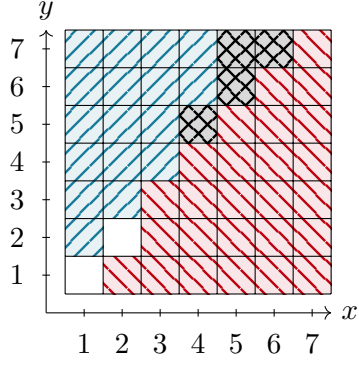
    
    The plot shows that $C = \cfg{3\cdot x, 4\cdot y}$ is a suitable configuration. Its initial tolerance is $6$, since all configurations $D \leq C$ are rejected. We now consider the following fair $1$-execution (state changes are highlighted in bold):
    \begin{align}
        &\cfg{3\cdot ([0, 2), -1), 4\cdot(-1, [0, 1))}\\
        & \qquad\tostar \cfg{3\cdot ([0, 2), \mathbf{0}), 3\cdot(-1, \mathbf{0}), \mathbf{1}\cdot(-1, [0, 1))}\\
        & \qquad\sn     \cfg{3\cdot ([0, 2), 0), \mathbf{2}\cdot(-1, 0), (-1, [0, 1))}\\
        & \qquad\tostar \cfg{\mathbf{1}\cdot([0, 2), 0), 2\cdot(\mathbf{[1, 3)}, 0), 2\cdot(-1, 0), (-1, [0, 1))}\\
        & \qquad\tostar \cfg{([0, 2), 0), 2\cdot([1, \mathbf{2}), 0), 2\cdot(\mathbf{0}, 0), (-1, [0, 1))}\\
        & \qquad\tostar \cfg{([0, 2), 0), 2\cdot(\mathbf{[2, 3)}, 0), 2\cdot(0, 0), (-1, [0, 1))}\\
        & \qquad\tostar \cfg{([0, 2), 0), \mathbf{1}\cdot([2, 3), 0), \mathbf{3}\cdot(0, 0), (\mathbf{0}, [0, 1))}
    \end{align}
    From $(1)$ to $(2)$ only the second protocol executes a non-silent transition: three of the $(-1, [0, 1))$ agents annihilate with $([0, 2), -1)$ in the second protocol. Then in $(3)$, we proceed to snipe a single $(-1, 0)$ agent. This does not influence the second protocol, since it already is finished. In $(4)$, the $([0, 2), 0)$ agents move upwards in the first protocol and after that in $(5)$, we annihilate $([2, 3), 0)$ with $(-1, 0)$ twice. In $(6)$ we again move the agents in the first protocol upwards and in $(7)$ we do a final annihilation.

 The last configuration is accepting and terminal, thus we have shown that the product construction applied to our protocol is not robust.
\end{example}
This example only shows that the simple construction for conjunction does not preserve robustness. But there could be a more sophisticated  construction, or a procedure that directly constructs a robust protocol for any predicate, for example from a finite automaton for the predicate. Currently, the simplest predicate for which we have not been able to find a robust protocol is 
\[\phi(x, y) = (x \geq y \lor (x + 1 \geq y \land x \bmod 5 = 0))\]
The problematic initial configurations are of the form $\cfg{5k \cdot x, n\cdot y}$ with $5k > n$. The initial tolerance of these configurations is $5k - n$, but the robustness of the majority protocol only gives correct results for up to $5k - n - 1$ snipes.

\bibliographystyle{splncs04}
\bibliography{references}

\appendix

\end{document}